\newcommand{\tableref}[1]{Table~\ref{tab:#1}}
\newcommand{\figref}[1]{Figure~\ref{fig:#1}}
\newcommand{\figsubref}[1]{Figure~\subref*{fig:#1}}
\renewcommand\footnoterule{\kern-3pt \hrule width 2in \kern 2.6pt}
\newcolumntype{C}[1]{>{\centering\arraybackslash}p{#1}}
\newtheorem{definition}{Definition}
\newtheorem{theorem}{Theorem}
\newtheorem{lemma}{Lemma}
\title{A Server-based Approach for Predictable GPU Access with Improved Analysis}
\author[1]{Hyoseung Kim}
\affil[1]{University of California, Riverside \\ hyoseung@ucr.edu}
\author[2]{Pratyush Patel}
\affil[2]{Carnegie Mellon University \\ pratyusp@andrew.cmu.edu}
\author[3]{Shige Wang}
\affil[3]{General Motors R\&D \\ shige.wang@gm.com}
\author[4]{Ragunathan (Raj) Rajkumar}
\affil[4]{Carnegie Mellon University \\rajkumar@cmu.edu}
\begin{document}
\date{}
\maketitle

\pagestyle{plain}
\pagenumbering{arabic}

\begin{abstract}

We propose a server-based approach to manage a general-purpose graphics processing unit (GPU) in a predictable and efficient manner. Our proposed approach introduces a GPU server that is a dedicated task to handle GPU requests from other tasks on their behalf. The GPU server ensures bounded time to access the GPU, and allows other tasks to suspend during their GPU computation to save CPU cycles. By doing so, we address the two major limitations of the existing real-time synchronization-based GPU management approach: busy waiting within critical sections and long priority inversion. We have implemented a prototype of the server-based approach on a real embedded platform. This case study demonstrates the practicality and effectiveness of the server-based approach. Experimental results indicate that the server-based approach yields significant improvements in task schedulability over the existing synchronization-based approach in most practical settings. Although we focus on a GPU in this paper, the server-based approach can also be used for other types of computational accelerators. 

\end{abstract}

\section{Introduction}
The high computational demands of complex algorithmic tasks used in recent embedded and cyber-physical systems pose substantial challenges in guaranteeing their timeliness. 
For example, a self-driving car~\cite{Kim_ICCPS13,Wei_IV13} executes perception and motion planning algorithms in addition to running tasks for data fusion from tens of sensors equipped within the vehicle. Since these tasks are computationally intensive, it becomes hard to satisfy their timing requirements when they execute on the same hardware platform. 
Fortunately, many of today's embedded multi-core processors, such as NXP i.MX6~\cite{iMX6} and NVIDIA TX1/TX2~\cite{NVIDIA_TX1}, have an on-chip, general-purpose graphics processing unit (GPU), which can greatly help in addressing the timing challenges of computation-intensive tasks by accelerating their execution. 

The use of GPUs in a time predictable manner brings up several challenges. 
First, many of today's commercial-off-the-shelf (COTS) GPUs do not support a preemption mechanism, and GPU access requests from application tasks are handled in a sequential, non-preemptive manner.
This is primarily due to the high overhead expected on GPU context switching~\cite{Tanasic_ISCA14}.  Although some recent GPU architectures, such as NVIDIA Pascal~\cite{NVIDIA_Pascal}, claim to offer GPU preemption, there is no documentation regarding their explicit behavior, and existing drivers (and GPU programming APIs) do not offer any programmer control over GPU preemption at the time of writing this paper. 
Second, COTS GPU device drivers do not respect task priorities and the scheduling policy used in the system. Hence, in the worst case, the GPU access request of the highest-priority task may be delayed by the requests of all lower-priority tasks in the system, which could possibly cause unbounded priority inversion. 


The aforementioned issues have motivated the development of predictable GPU management techniques to ensure task timing constraints while achieving performance improvement~\cite{Elliott_RTS12,Elliott_RTSS13,Elliott_RTS13,Kato_RTSS11,Kato_ATC11, Kim_RTSS13,Zhou_RTAS15}.
Among them, the work in \cite{Elliott_RTS12,Elliott_RTSS13,Elliott_RTS13} introduces a {\em synchronization-based approach} that models GPUs as mutually-exclusive resources and uses real-time synchronization protocols to arbitrate GPU access. This approach has many benefits. First, it can schedule GPU requests from tasks in an analyzable manner, without making any change to GPU device drivers. 
Second, it allows the existing task schedulability analysis methods, originally developed for real-time synchronization protocols, to be easily applied to analyze tasks accessing GPUs. However, due to the underlying assumption on critical sections, this approach requires tasks to busy-wait during the entire GPU execution, thereby resulting in substantial CPU utilization loss. Note that semaphore-based locks also experience this problem as the busy waiting occurs {\em within} a critical section after acquiring the lock. Also, the use of real-time synchronization protocols for GPUs may unnecessarily delay the execution of high-priority tasks due to the priority-boosting mechanism employed in some protocols, such as MPCP~\cite{MPCP} and FMLP+~\cite{Brandenburg_11}. We will review these issues in Section~\ref{GPU_limitations_of_synchronization_approach}.

In this paper, we develop a {\em server-based approach} for predictable GPU access control to address the aforementioned limitations of the existing synchronization-based approach. 
Our proposed approach introduces a dedicated GPU server task that receives GPU access requests from other tasks and handles the requests on their behalf. Unlike the synchronization-based approach, the server-based approach allows tasks to suspend during GPU computation while preserving analyzability. This not only yields CPU utilization benefits, but also reduces task response times. We present the schedulability analysis of tasks under our server-based approach, which accounts for the overhead of the GPU server. Although we have focused on a GPU in this work, our approach can be used for other types of computational accelerators, such as a digital signal processor (DSP). 

We have implemented a prototype of our approach on a SABRE Lite embedded platform~\cite{SabreLite} equipped with four ARM Cortex-A9 CPUs and one Vivante GC2000 GPU. Our case study using this implementation with the workzone recognition algorithm~\cite{Lee_ITSC13} developed for a self-driving car demonstrates the practicality and effectiveness of our approach in improving CPU utilization and reducing response time. We have also conducted detailed experiments on task schedulability. Experimental results show that while our server-based approach does not dominate the synchronization-based approach, it outperforms the latter in most of the practical cases. 

This paper is an extended version of our conference paper~\cite{Kim_RTCSA17}, with the following new contributions: (i) an improved analysis for GPU request handling time under the server-based approach, (ii) a discussion on task allocation with the GPU server, and (iii) new experimental results with the improved analysis.

The rest of this paper is organized as follows: Section~\ref{related_work} reviews relevant prior work. Section~\ref{system_model} describes our system model. Section~\ref{gpu_synchronization_approach} reviews the use of the synchronization-based approach for GPU access control and discusses its limitations. Section~\ref{gpu_server_approach} presents our proposed server-based approach. Section~\ref{evaluation} evaluates the approach using a practical case study and overhead measurements, along with detailed schedulabilty experiments. Section~\ref{conclusions} concludes the paper.

\section{Related Work}
\label{related_work}
Many techniques have been developed to utilize a GPU as a predictable, shared computing resource.
TimeGraph~\cite{Kato_ATC11} is a real-time GPU scheduler that schedules GPU access requests from tasks with respect to task priorities. This is done by modifying an open-source GPU device driver and monitoring GPU commands at the driver level. 
RGEM~\cite{Kato_RTSS11} allows splitting a long data-copy operation into smaller chunks, reducing blocking time on data-copy operations. Gdev~\cite{Kato_ATC12} provides common APIs to both user-level tasks and the OS kernel to use a GPU as a standard computing resource. GPES~\cite{Zhou_RTAS15} is a software technique to break a long GPU execution segment into smaller sub-segments, allowing preemptions at the boundaries of sub-segments. 
While all these techniques can mitigate the limitations of today's GPU hardware and device drivers, they have not considered the schedulability of tasks using the GPU. In other words, they handle GPU requests from tasks in a predictable manner, but do not formally analyze the worst-case timing behavior of tasks on the CPU side, which is addressed in this paper. 

Elliott et al.~\cite{Elliott_RTS12,Elliott_RTSS13,Elliott_RTS13} modeled GPUs as mutually-exclusive resources and proposed the use of real-time synchronization protocols for accessing GPUs. Based on this, they developed GPUSync~\cite{Elliott_RTSS13}, a software framework for GPU management in multi-core real-time systems. GPUSync supports both fixed- and dynamic-priority scheduling policies, and provides various features, such as budget enforcement, multi-GPU support, and clustered scheduling. It uses separate locks for copy and execution engines of GPUs to enable overlapping of GPU data transmission and computation.
The pros and cons of the synchronization-based approach in general will be thoroughly discussed in Section~\ref{gpu_synchronization_approach}.

Server-based software architectures for GPU access control have been studied in the context of GPU virtualization, such as rCUDA~\cite{Duato_HCPS10} and vCUDA~\cite{Shi_IEEE12}. They implement an RPC server to receive GPU acceleration requests from high-performance computing applications running on remote or virtual machines. The received requests are then handled by the server on a host machine equipped with a GPU. While our work uses a similar software architecture to these approaches, our work differs by providing guaranteed bounds on the worst-case GPU access time.

The self-suspension behavior of tasks has been studied in the context of real-time systems~\cite{Audsley_RTAS04,Bletsas_TR15,Chen_TR16,Kim_RTSS13}. This is motivated by the fact that tasks can suspend while accessing hardware accelerators like GPUs. Kim et al.~\cite{Kim_RTSS13} proposed {\em segment-fixed priority scheduling}, which assigns different priorities and phase offsets to each segment of tasks. They developed several heuristics for priority and offset assignment because finding the optimal solution for that assignment is NP-hard in the strong sense. Chen et al.~\cite{Chen_TR16} reported errors in existing self-suspension analyses and presented corrections for the errors. Those approaches assume that the duration of self-suspension is given as a fixed task parameter. However, this assumption does not comply with the case where a task accesses a shared GPU and the waiting time for the GPU is affected by other tasks in the system. In this work, we use the results in \cite{Bletsas_TR15,Chen_TR16} to take into account the effect of self-suspension in task schedulability, and propose techniques to bound the worst-case access time to a shared GPU.

Recently, there have been efforts to understand the details of GPU-internal scheduling on some NVIDIA GPUs. Otterness et al.~\cite{Otterness_RTAS17} report that multiple GPU execution requests from different processes may be scheduled concurrently on a single NVIDIA TX1 GPU but the execution time of individual requests becomes longer and less predictive compared to when they run sequentially. Amert et al.~\cite{Amert_RTSS17} discuss the internal scheduling policy of NVIDIA TX2 for GPU execution requests made by threads sharing the same address space, i.e., those belonging to the same process. While these efforts have potential to improve GPU utilization and overall system efficiency, it is unclear whether their findings are valid on other GPU architectures and other types of accelerators. Although we do not consider these recently-found results in this work as our goal is to develop a broadly-applicable solution, we plan to consider them in our future work.

\begin{figure}[t]
	\centering
	\subfloat{
		\includegraphics[width=0.8\columnwidth]{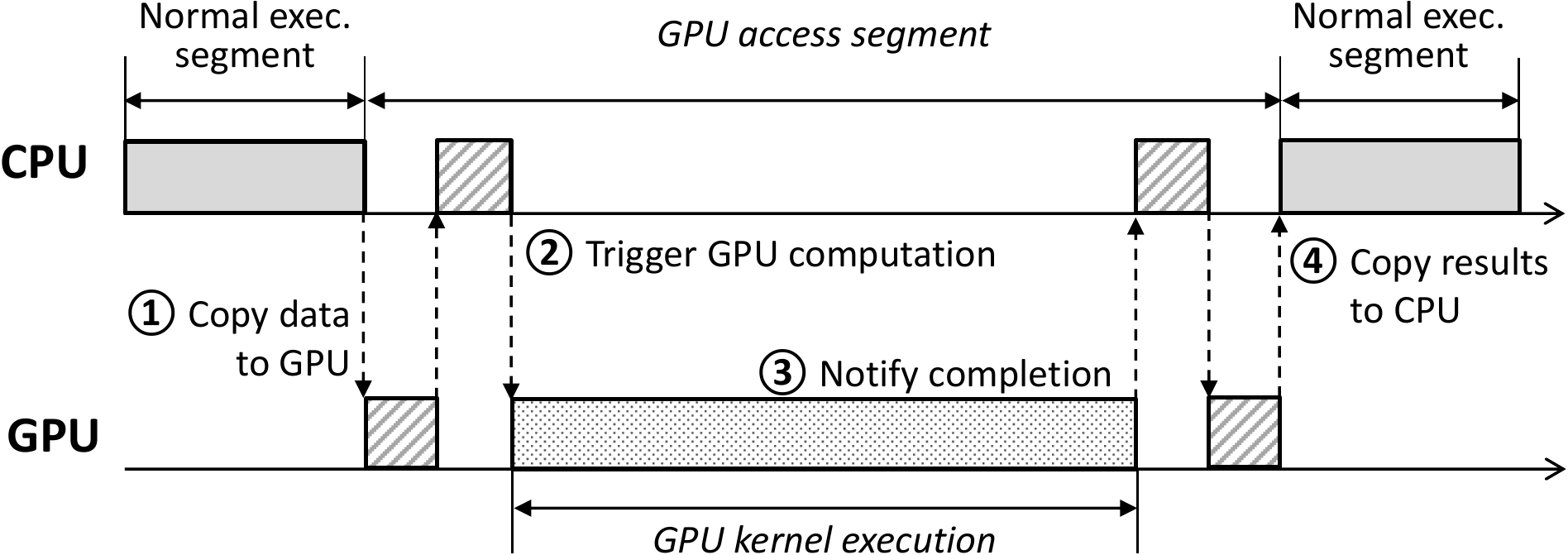}
	}
	\caption{Execution pattern of a task accessing a GPU}
	\label{fig:GPU_execution_pattern}	
\end{figure}

\section{System Model}
\label{system_model}
The work in this paper assumes a multi-core platform equipped with a single general-purpose GPU device.\footnote{This assumption reflects today's GPU-enabled embedded processors, e.g., NXP i.MX6~\cite{iMX6} and NVIDIA TX1/TX2~\cite{NVIDIA_TX1}. A multi-GPU platform would be used for future real-time embedded and cyber-physical systems, but extending our work to handle multiple GPUs remains as future work.}
The GPU is shared among multiple tasks, and GPU requests from tasks are handled in a sequential, non-preemptive manner. The GPU has its own memory region, which is assumed to be sufficient enough for the tasks under consideration. We do not assume the concurrent execution of GPU requests from different tasks, called {\em GPU co-scheduling}, because recent work~\cite{Otterness_RTAS17} reports that ``co-scheduled GPU programs from different programs are not truly concurrent, but are multiprogrammed instead'' and ``this (co-scheduling) may lead to slower or less predictable total times in individual programs''.

We consider sporadic tasks with constrained deadlines. The execution time of a task using a GPU is decomposed into {\em normal execution segments} and {\em GPU access segments}. Normal execution segments run entirely on CPU cores and GPU access segments involve GPU operations. \figref{GPU_execution_pattern} depicts an example of a task having a single GPU access segment. In the GPU access segment, the task first copies data needed for GPU computation, from CPU memory to GPU memory (Step~\textcircled{\footnotesize 1} in \figref{GPU_execution_pattern}). This is typically done using Direct Memory Access (DMA), which requires no (or minimal) CPU intervention. 
The task then triggers the actual GPU computation, also referred to as {\em GPU kernel execution}, and waits for the GPU computation to finish (Step~\textcircled{\footnotesize 2}). The task is notified when the GPU computation finishes (Step~\textcircled{\footnotesize 3}), and it copies the results back from the GPU to the CPU (Step~\textcircled{\footnotesize 4}). Finally, the task continues its normal execution segment. Note that during the time when CPU intervention is not required, e.g., during data copies with DMA and GPU kernel execution, the task may suspend or busy-wait, depending on the implementation of the GPU device driver and the configuration used. 

\smallskip
\noindent\textbf{Synchronous and Asynchronous GPU Access.}
The example in \figref{GPU_execution_pattern} uses {\em synchronous mode} for GPU access, where each GPU command, such as memory copy and GPU kernel execution, can be issued only after the prior GPU command has finished. However, many GPU programming interfaces, such as CUDA and OpenCL, also provide {\em asynchronous mode}, which allows a task to overlap CPU and GPU computations. For example, if a task sends all GPU commands in asynchronous mode at the beginning of the GPU access segment, then it can either perform other CPU operations within the same GPU access segment or simply wait until all the GPU commands complete. Hence, while the sequence of GPU access remains the same, the use of asynchronous mode can affect the amount of active CPU time in a GPU segment.

\smallskip
\noindent\textbf{Task Model.}
A task $\tau_i$ is characterized as follows:
$${\tau}_i:=(C_i, T_i, D_i, G_i, \eta_i)$$
\begin{itemize}
	\item $C_i$: the sum of the worst-case execution time (WCET) of all normal execution segments of task ${\tau}_i$.
	\item $T_i$: the minimum inter-arrival time of each job of ${\tau_i}$.
	\item $D_i$: the relative deadline of each job of $\tau_i$ ($D_i \le T_i$).
	\item $G_i$: the maximum accumulated duration of all GPU segments of $\tau_i$, when there is no other task competing for a GPU. 
	\item $\eta_i$: the number of GPU access segments in each job of $\tau_i$.
\end{itemize}
The utilization of a task $\tau_i$ is defined as $U_i=(C_i+G_i)/T_i$. Parameters $C_i$ and $G_i$ can be obtained by either measurement-based or static-analysis tools. If a measurement-based approach is used, $C_i$ and $G_i$ can be measured when $\tau_i$ executes alone in the system without any other interfering tasks and they need to be conservatively estimated. A task using a GPU has one or more GPU access segments. We use $G_{i,j}$ to denote the maximum duration of the $j$-th GPU access segment of $\tau_i$, i.e., $G_i=\sum_{j=1}^{\eta_i}G_{i,j}$.
Parameter $G_{i,j}$ can be decomposed as follows:
$$G_{i,j}:=(G^e_{i,j}, G^m_{i,j})$$
\begin{itemize}
	\item $G^e_{i,j}$: the WCET of pure GPU operations that do not require CPU intervention in the $j$-th GPU access segment of $\tau_i$. 
	\item $G^m_{i,j}$: the WCET of miscellaneous operations that require CPU intervention in the $j$-th GPU access segment of $\tau_i$.
\end{itemize}
$G^e_{i,j}$ includes the time for GPU kernel execution, and $G^m_{i,j}$ includes the time for copying data, launching the kernel, notifying the completion of GPU commands, and executing other CPU operations. The cost of triggering self-suspension during CPU-inactive time in a GPU segment is assumed to be taken into account by $G^m_{i,j}$. If data are copied to or from the GPU using DMA, only the time for issuing the copy command is included in $G^m_{i,j}$; the time for actual data transmission by DMA is modeled as part of $G^e_{i,j}$, as it does not require CPU intervention.
Note that $G_{i,j} \le G^e_{i,j}+G^m_{i,j}$ because $G^e_{i,j}$ and $G^m_{i,j}$ are not necessarily observed on the same control path and they may overlap in asynchronous mode.

\smallskip
\noindent\textbf{CPU Scheduling.}
In this work, we focus on {\em partitioned fixed-priority preemptive task scheduling} due to the following reasons: \emph{(i)}~it is widely supported in many commercial real-time embedded OSs such as OKL4~\cite{OKL4} and QNX RTOS~\cite{QNX_RTOS}, and \emph{(ii)}~it does not introduce task migration costs. 
Thus, each task is statically assigned to a single CPU core. Any fixed-priority assignment, such as Rate-Monotonic~\cite{Liu_Layland} can be used for tasks. 
Each task $\tau_i$ is assumed to have a unique priority $\pi_i$. An arbitrary tie-breaking rule can be used to achieve this assumption under fixed-priority scheduling.

\section{Limitations of Synchronization-based GPU Access Control}
\label{gpu_synchronization_approach}

In this section, we characterize the limitations of using a real-time synchronization protocol for tasks accessing a GPU on a multi-core platform. We consider semaphore-based synchronization, where a task attempting to acquire a lock suspends if the lock is already held by another task. This is because semaphores are more efficient than spinlocks for long critical sections~\cite{Lakshmanan_RTSS09}, which is the case for GPU access.

\subsection{Overview}
The synchronization-based approach models the GPU as a mutually-exclusive resource and the GPU access segments of tasks as critical sections. A single semaphore-based mutex is used for protecting such GPU critical sections. Hence, under the synchronization-based approach, a task should hold the GPU mutex to enter its GPU access segment. 
If the mutex is already held by another task, the task is inserted into the waiting list of the mutex and suspends until the mutex can be held by that task. Some implementations like GPUSync~\cite{Elliott_RTSS13} use separate locks for internal resources of the GPU, e.g., copy and execution engines, to achieve parallelism in GPU access, but we focus on a single lock for the entire GPU for simplicity. 

Among a variety of real-time synchronization protocols, we consider the Multiprocessor Priority Ceiling Protocol (MPCP)~\cite{MPCP2,MPCP} as a representative, because it works for partitioned fixed-priority scheduling that we use in our work and it has been widely referenced in the literature. 
We shall briefly review the definition of MPCP below. More details on MPCP can be found in \cite{Lakshmanan_ECRTS09,MPCP2,MPCP}. 
\begin{enumerate}
	\item When a task $\tau_i$ requests an access to a resource $R_k$, it can be granted to $\tau_i$ if it is not held by another task.
	\item While a task $\tau_i$ is holding a resource that is shared among tasks assigned to different cores, the priority of $\tau_i$ is raised to $\pi_{B}+\pi_{i}$, where $\pi_{B}$ is a base task-priority level greater than that of any task in the system, and $\pi_{i}$ is the normal priority of $\tau_i$. This ``priority boosting'' under MPCP is referred to as the {\em global priority ceiling} of $\tau_i$.
	\item When a task $\tau_i$ requests access to a resource $R_k$ that is already held by another task, $\tau_i$ is inserted to the waiting list of the mutex for $R_k$ and suspends. 
	\item When a resource $R_k$ is released and the waiting list of the mutex for $R_k$ is not empty, the highest-priority task in the waiting list is dequeued and granted $R_k$. 
\end{enumerate}

\subsection{Limitations}
\label{GPU_limitations_of_synchronization_approach}

As described in Section~\ref{system_model}, each GPU access segment contains various operations, including data copies, notifications, and GPU computation. Specifically, a task may suspend when CPU intervention is not required, e.g., during GPU kernel execution, to save CPU cycles. However, under the synchronization-based approach, any task in its GPU access segment (i.e., running within a critical section after acquiring the lock) should ``busy-wait'' for any operation conducted on the GPU in order to ensure timing predictability. This is because most real-time synchronization protocols and their analyses, such as MPCP~\cite{MPCP}, FMLP~\cite{Block_RTCSA07}, FMLP+~\cite{Brandenburg_11} and OMLP~\cite{Brandenburg_13}, assume that \emph{(i)} a critical section is executed entirely on the CPU, and \emph{(ii)} there is no suspension during the execution of the critical section. Hence, during its GPU kernel execution, the task is not allowed to suspend even when no CPU intervention is required.\footnote{These protocols allow suspension while waiting for lock acquisition.} For example, while the GPUSync implementation~\cite{Elliott_RTSS13} can be configured to suspend instead of busy-wait during GPU kernel execution at runtime, its analysis uses a suspension-oblivious approach that does not incorporate benefits from self-suspension during GPU execution. 
As the time for GPU kernel execution and data transmission by DMA increases, the CPU time loss under the synchronization-based approach is therefore expected to increase. The analyses of those protocols could possibly be modified to allow suspension within critical sections, at the potential expense of increased pessimism.\footnote{The analyses for MPCP~\cite{Lakshmanan_RTSS09} and FMLP+~\cite{Brandenburg_11} under partitioned scheduling show that, whenever a task resumes from suspension, it may experience priority inversion from local tasks with higher boosted priorities. Hence, allowing self-suspension within critical sections under these protocols will likely increase chances of priority inversion and requires a change in the task model to limit the number of suspensions. Note that this however does not affect the asymptotic optimality of FMLP+, as reported in~\cite{Brandenburg_ECRTS14}.}

\begin{figure}[t]
	\centering
	\subfloat{
		\includegraphics[width=0.8\columnwidth]{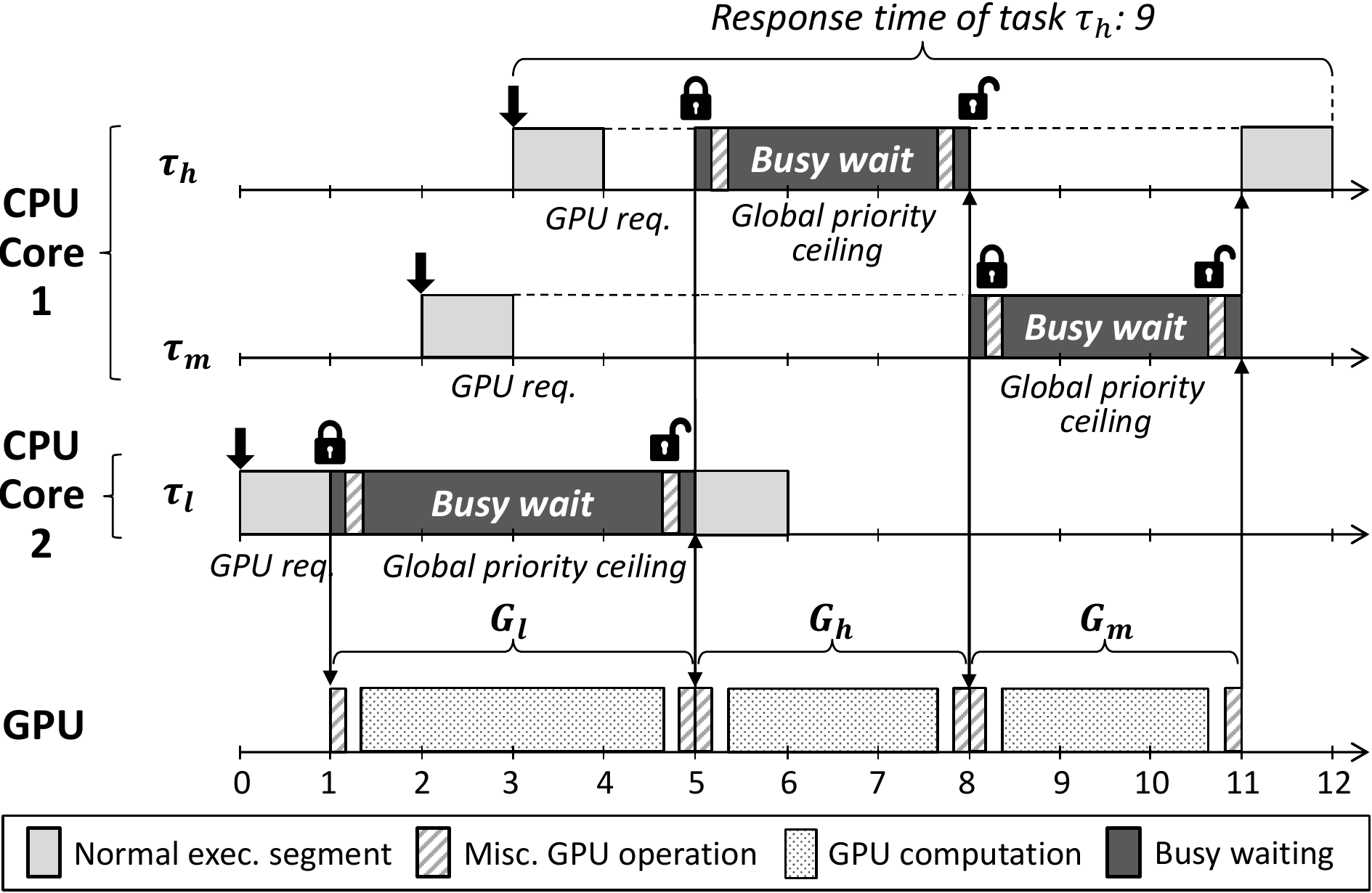}
	}
	\caption{Example schedule of GPU-using tasks under the synchronization-based approach with MPCP}
	\label{fig:GPU_synch_based_approach_example}
\end{figure}

Some synchronization protocols, such as MPCP~\cite{MPCP}, FMLP~\cite{Block_RTCSA07} and FMLP+~\cite{Brandenburg_ECRTS14}, use priority boosting (either restricted or unrestricted) as a progress mechanism to prevent unbounded priority inversion. However, the use of priority boosting could cause another problem we call ``long priority inversion''. 
We describe this problem with the example in \figref{GPU_synch_based_approach_example}. There are three tasks, $\tau_h$, $\tau_m$, and $\tau_l$, that have high, medium, and low priorities, respectively. 
Each task has one GPU access segment that is protected by MPCP and executed between two normal execution segments. $\tau_h$ and $\tau_m$ are allocated to Core 1, and $\tau_l$ is allocated to Core 2. 

Task $\tau_l$ is released at time 0 and makes a GPU request at time 1. Since there is no other task using the GPU at that point, $\tau_l$ acquires the mutex for the GPU and enters its GPU access segment. $\tau_l$ then executes with the global priority ceiling associated with the mutex (priority boosting). 
Note that while the GPU kernel of $\tau_l$ is executed, $\tau_l$ also consumes CPU cycles due to the busy-waiting requirement of the synchronization-based approach.
Tasks $\tau_m$ and $\tau_h$ are released at time 2 and 3, respectively. They make GPU requests at time 3 and 4, but the GPU cannot be granted to either of them because it is already held by $\tau_l$. Hence, $\tau_m$ and $\tau_h$ suspend. At time 5, $\tau_l$ releases the GPU mutex, and $\tau_h$ acquires the mutex next because it has higher priority than $\tau_m$. At time 8, $\tau_h$ finishes its GPU segment, releases the mutex, and restores its normal priority. Next, $\tau_m$ acquires the mutex, and preempts the normal execution segment of $\tau_h$ because the global priority ceiling of the mutex is higher than $\tau_h$'s normal priority. Hence, although the majority of $\tau_m$'s GPU segment merely performs busy-waiting, the execution of the normal segment of $\tau_h$ is delayed until the GPU access segment of $\tau_m$ finishes. Finally, $\tau_h$ completes its normal execution segment at time 12, making the response time of $\tau_h$ 9 in this example.

\section{Server-based GPU Access Control}
\label{gpu_server_approach}

\begin{figure}[t]
	\centering
	\subfloat{
		\includegraphics[width=0.8\columnwidth]{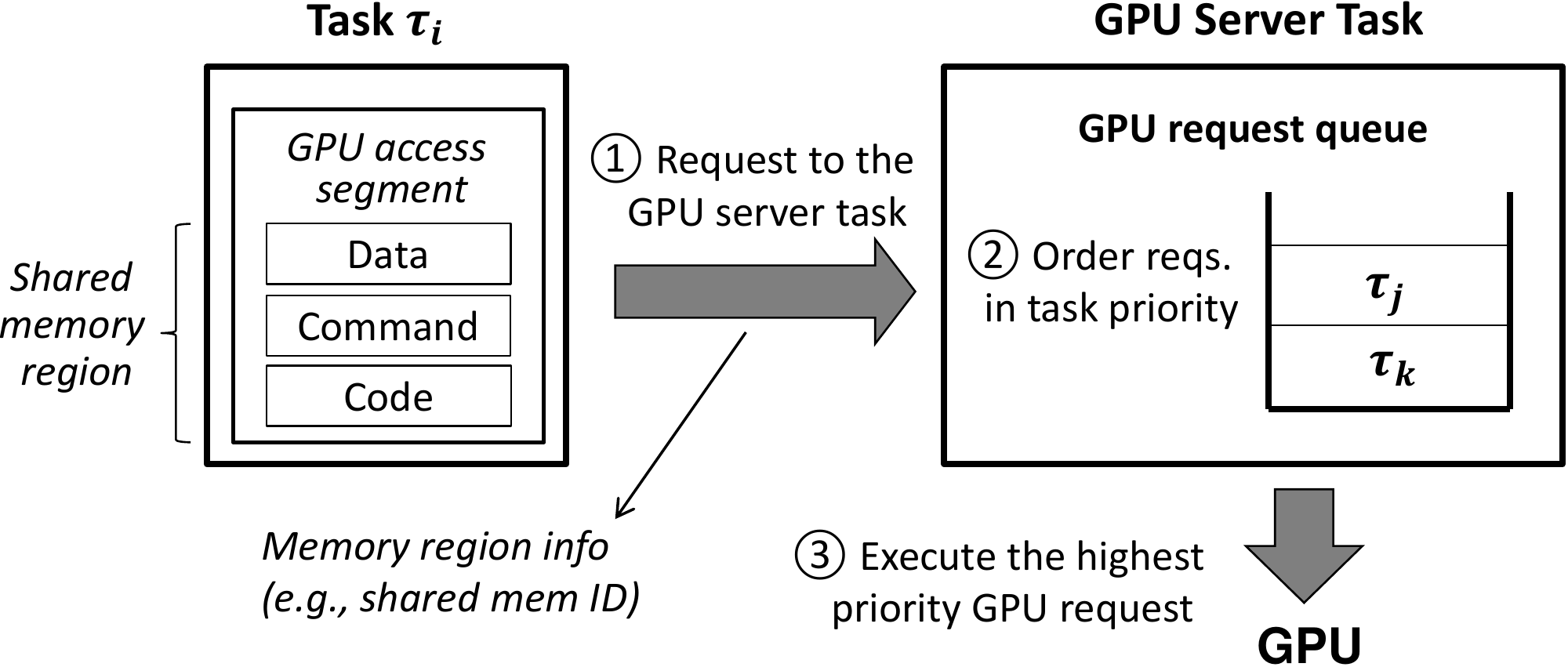}
	}
	\caption{GPU access procedure under our server-based approach}
	\label{fig:GPU_server_based_approach}
\end{figure}

We present our server-based approach for predictable GPU access control. This approach addresses the two main limitations of the synchronization-based approach, namely, busy waiting and long priority inversion.

\subsection{GPU Server Design}
Our server-based approach creates a {\em GPU server} task that handles GPU access requests from other tasks on their behalf. The GPU server is assigned the highest priority in the system, which is to prevent preemptions by other tasks. \figref{GPU_server_based_approach} shows the sequence of GPU request handling under our server-based approach. First, when a task $\tau_i$ enters its GPU access segment, it makes a GPU access request to the GPU server, not to the GPU device driver. The request is made by sending the memory region information for the GPU access segment, including input/output data, commands and code for GPU kernel execution to the server task. This requires the memory regions to be configured as shared regions so that the GPU server can access them with their identifiers, e.g., \texttt{shmid}. After sending the request to the server, $\tau_i$ suspends, allowing other tasks to execute. Secondly, the server enqueues the received request into the GPU request queue, if the GPU is being used by another request. The GPU request queue is a priority queue, where elements are ordered in their task priorities. Thirdly, once the GPU becomes free, the server dequeues a request from the head of the queue and executes the corresponding GPU segment. During CPU-inactive time, e.g., data copy with DMA and GPU kernel execution, the server suspends to save CPU cycles in synchronous mode. In asynchronous mode, the server performs remaining miscellaneous CPU operations of the request being handled, i.e., asynchronous execution part, and suspends after completing them. This approach allows the GPU server to offer the same level of parallelism as in the asynchronous mode of the synchronization-based approach.\footnote{Note that achieving parallelism beyond this level is not in the scope of this paper. For example, if one wants to develop a task utilizing more than one CPU cores during its GPU execution, the corresponding GPU segment needs to be programmed with multithreads, which is different from our task model and needs additional research efforts to guarantee real-time predictability.}
When the request finishes, the server notifies the completion of the request and wakes up the task $\tau_i$. Finally, $\tau_i$ resumes its execution. 

The use of the GPU server task inevitably introduces the following additional computational costs: \emph{(i)}~sending a GPU request to the server and waking up the server task, \emph{(ii)}~enqueueing the request and checking the request queue to find the highest-priority GPU request, and \emph{(iii)}~notifying the completion of the request to the corresponding task. We use the term $\epsilon$ to characterize the GPU server overhead that upper-bounds these computational costs.

\begin{figure}[t]
	\centering
	\subfloat{
		\includegraphics[width=0.8\columnwidth]{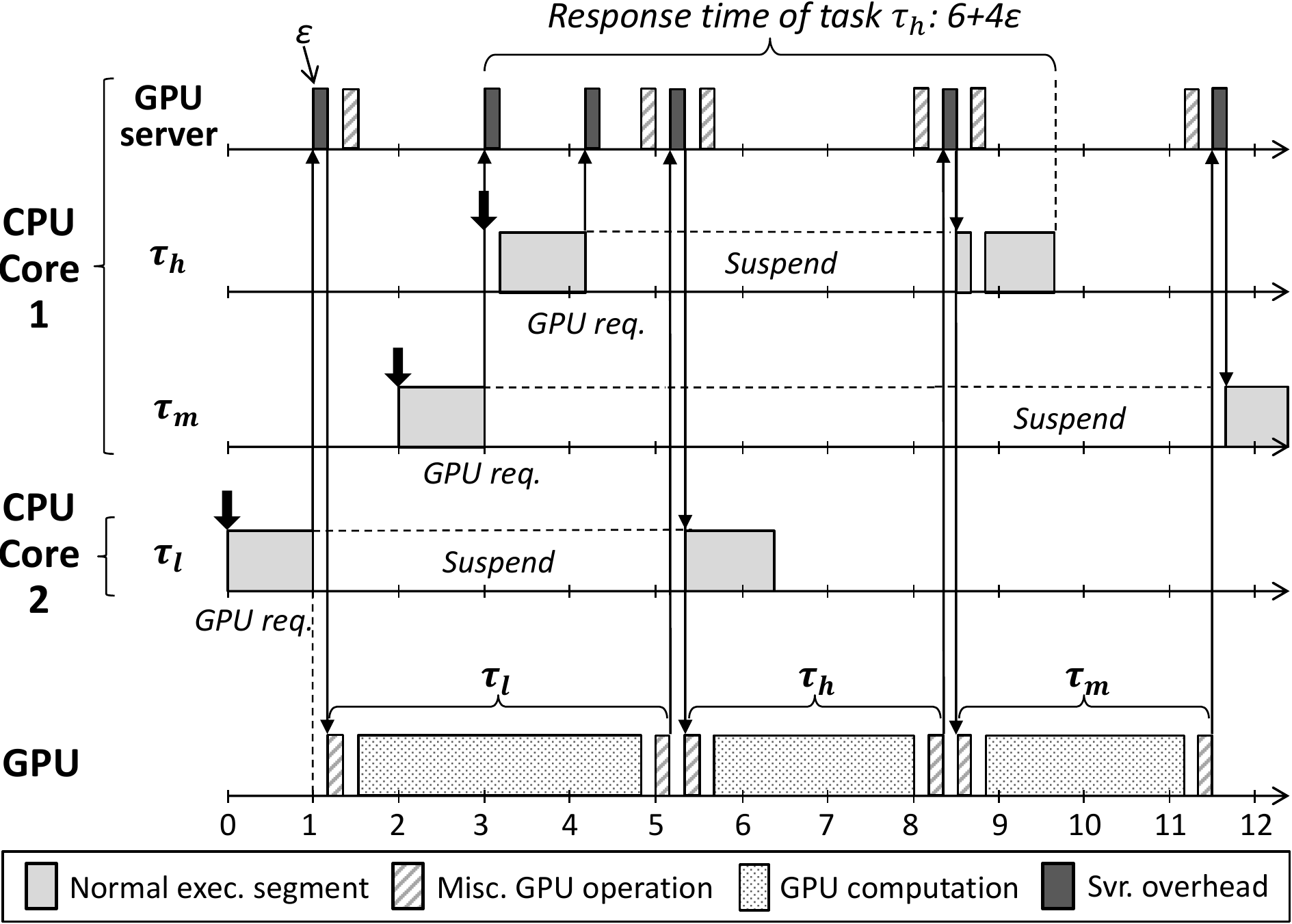}
	}
	\caption{Example schedule under our server-based approach}
	\label{fig:GPU_server_based_approach_example}
\end{figure}

\figref{GPU_server_based_approach_example} shows an example of task scheduling under our server-based approach. This example has the same configuration as the one in \figref{GPU_synch_based_approach_example} and uses synchronous mode for GPU access.\footnote{The GPU server supports both synchronous and asynchronous GPU access.} The GPU server, which the server-based approach creates, is allocated to Core 1. 
Each GPU access segment has two sub-segments of miscellaneous operations, each of which is assumed to amount to $\epsilon$. 

At time 1, the task $\tau_l$ makes a GPU access request to the server task. The server receives the request and executes the corresponding GPU access segment at time $1+\epsilon$, where $\epsilon$ represents the overhead of the GPU server. Since the server-based approach does not require tasks to busy-wait, $\tau_l$ suspends until the completion of its GPU request. The GPU request of $\tau_m$ at time 3 is enqueued into the request queue of the server. As the server executes with the highest priority in the system, it delays the execution of $\tau_h$ released at time 3 by $\epsilon$. Hence, $\tau_h$ starts execution at time $3+\epsilon$ and makes a GPU request at time $4+\epsilon$. When the GPU access segment of $\tau_l$ completes, the server task is notified. The server then notifies the completion of the GPU request to $\tau_l$ and wakes it up, and subsequently executes the GPU access segment of $\tau_h$ at time $5+2\epsilon$. The task $\tau_h$ suspends until its GPU request finishes. The GPU access segment of $\tau_h$ finishes at time $8+2\epsilon$ and that of $\tau_m$ starts at time $8+3\epsilon$. Unlike the case under the synchronization-based approach, $\tau_h$ can continue to execute its normal execution segment from time $8+3\epsilon$, because $\tau_m$ suspends and the priority of $\tau_m$ is not boosted. The task $\tau_h$ finishes its normal execution segment at time $9+4\epsilon$, and hence, the response time of $\tau_h$ is $6+4\epsilon$. Recall that the response time of $\tau_h$ is 9 for the same taskset under the synchronization-based approach, as shown in \figref{GPU_synch_based_approach_example}. Therefore, we can conclude that the server-based approach provides a shorter response time than the synchronization-based approach for this example taskset, if the value of $\epsilon$ is under $3/4$ time units, which, as measured in Section~\ref{case_study}, is a very pessimistic value for $\epsilon$.


\subsection{Schedulability Analysis}

We analyze task schedulability under our server-based approach. 
Since all the GPU requests of tasks are handled by the GPU server, we first identify the GPU request handling time for the server. The following properties hold for the server-based approach:

\begin{lemma}
	\label{lemma:server_overhead}
	The GPU server task imposes up to $2\epsilon$ of extra CPU time on each GPU request.
\end{lemma}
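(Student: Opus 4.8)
The plan is to bound, for a single GPU request, the total amount of CPU time the server spends on anything other than executing that request's own GPU access segment (i.e., other than the inherent $G^e_{i,j}+G^m_{i,j}$ work it performs on the task's behalf). I would begin by enumerating every server-side action that one request triggers over its lifetime: upon receiving the request the server wakes, inserts it into the priority queue, and (if the GPU is busy) suspends; when the GPU becomes free the server wakes, scans the queue to select the highest-priority pending request, dequeues it, and starts running its GPU segment; and finally, when notified that this GPU segment has completed, the server signals completion to the requesting task and wakes it up. Everything the server does for the request falls into one of these actions.

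Next I would group these actions into two bursts of ``extra'' CPU work. The first burst --- request reception, queue insertion, and the later scan/dequeue that picks this request --- is precisely the combination of the costs labelled (i) and (ii) in the description of the server, and is therefore covered by a single $\epsilon$. The second burst --- notifying the requesting task of completion and waking it up --- is the cost labelled (iii), covered by another $\epsilon$. Since the interval strictly between the end of the first burst and the start of the second is occupied only by executing the request's own GPU segment, which is not counted as server overhead, the extra CPU time charged to the request is at most $\epsilon+\epsilon=2\epsilon$, which is the claim. This also matches the structure of Figure~\ref{fig:GPU_server_based_approach_example}, where each GPU access segment handled by the server is flanked by exactly two $\epsilon$-sized miscellaneous sub-segments.

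The step I expect to be the main obstacle is the accounting at request boundaries, together with the fact that the queue-insertion part of the first burst can occur well before the dequeue part --- specifically while the server is still busy with an earlier request, since a new request may arrive during another request's GPU segment. I would handle this by charging a request's insertion cost to the same $\epsilon$ that also covers that request's eventual dequeue (the value of $\epsilon$ is chosen to dominate the sum of these queue operations), and by observing that when two requests are served back-to-back the completion burst of the first and the dispatch burst of the second are two distinct $\epsilon$-bounded segments, so no burst is ever counted against two requests at once. The remaining edge cases only help: a request that finds the GPU idle on arrival collapses insertion and dequeue into a single pass, and the last request served before the server goes idle has no following dispatch burst, so in both cases the overhead is strictly below $2\epsilon$ and the bound remains conservative.
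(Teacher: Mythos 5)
Your proof is correct and follows essentially the same argument as the paper's: the server intervenes once before and once after each GPU segment, each intervention bounded by $\epsilon$, while the task-side cost of issuing the request and self-suspending is already charged to $G^m_{i,j}$. Your more detailed accounting of the two bursts and the boundary cases is a faithful elaboration of the paper's one-line justification rather than a different route.
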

\begin{proof}
As the GPU server intervenes before and after the execution of each GPU segment, each GPU request can cause at most $2\epsilon$ of overhead in the worst case. Note that the cost of issuing the GPU request as well as triggering self-suspension within the $j$-th GPU segment of $\tau_i$ is already taken into account by $G_{i,j}^m$, as described in Section~\ref{system_model}.
\end{proof}

When a task $\tau_i$ makes a GPU request, the GPU may be already handling a request from another task. Since the GPU executes in a non-preemptive manner, $\tau_i$'s request must wait for the completion of the currently-handled request. There may also be multiple pending requests that are prioritized over $\tau_i$'s request, and it has to wait for the completion of such requests. As a result, $\tau_i$ experiences {\em waiting time} for GPU access, which we define as follows:
\begin{definition}
	\label{def:waiting_time}
	The waiting time for a GPU access segment of a task $\tau_i$ is the interval between the time when that segment is released and the time when it begins execution.
\end{definition}

\begin{lemma}
	\label{lemma:GPU_server_handling_delay}
	The maximum handling time of all GPU requests of a single job of a task $\tau_i$ by the GPU server is given as follows:
	\begin{equation} \label{eq:GPU_server_handling_delay}
	\begin{split}
	B_i^{gpu}=\left\{
	\begin{array}{lr}
	B_i^{w}+G_i+2\eta_i\epsilon &: \eta_i > 0\\
	0 &: \eta_i = 0\\
	\end{array}
	\right.
	\end{split}
	\end{equation}
	where $\eta_i$ is the number of GPU access segments of $\tau_i$ and $B_i^{w}$ is an upper bound on the total waiting time for all $\eta_i$ GPU segments of $\tau_i$.
\end{lemma}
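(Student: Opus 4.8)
The plan is to decompose the total GPU request handling time of a single job of $\tau_i$ into three additive components and bound each separately, then simply sum them. First I would observe that, since by definition the handling time covers the entire span from when $\tau_i$'s first GPU segment is released until the last one completes being serviced by the server, and since $\tau_i$ has exactly $\eta_i$ GPU access segments, the job's GPU interaction consists of: (a) the time spent waiting in the server's priority queue (or behind a non-preemptively executing request) across all $\eta_i$ segments, which is by assumption upper-bounded by $B_i^w$; (b) the actual execution of the GPU work itself, whose accumulated duration when there is no competition is $G_i = \sum_{j=1}^{\eta_i} G_{i,j}$ by the task model; and (c) the server overhead. For the overhead term, I would invoke Lemma~\ref{lemma:server_overhead}: each of the $\eta_i$ GPU requests incurs at most $2\epsilon$ of extra CPU time from the server intervening before and after the GPU segment, so the total overhead across one job is at most $2\eta_i\epsilon$. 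Adding (a), (b), and (c) gives $B_i^w + G_i + 2\eta_i\epsilon$ for the case $\eta_i > 0$; the case $\eta_i = 0$ is immediate since a job with no GPU segments spends zero time being handled by the server.

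A subtle point I would need to address carefully is why $G_i$ — defined as the accumulated GPU duration \emph{when there is no other task competing for the GPU} — remains a valid bound on the actual GPU execution time even when other tasks are present. The justification is that contention only manifests as \emph{waiting} (since the GPU is non-preemptive and requests are serialized by the server), not as inflation of an in-progress request's duration; this is exactly the co-scheduling assumption stated in Section~\ref{system_model}, namely that requests are serviced sequentially rather than concurrently, so no slowdown occurs. Thus once a request is dequeued and begins executing, it runs for at most its solo duration, and the sum over all $\eta_i$ segments is at most $G_i$. I would also note that the self-suspension triggering cost and the request-issuing cost are, per the system model, already folded into $G_{i,j}^m$ and hence into $G_i$, so they are not double-counted with the $2\eta_i\epsilon$ overhead term (which captures only the server-side enqueue/dequeue/notify work).

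The main obstacle I anticipate is not any single calculation but rather being precise about what $B_i^w$ abstracts away: the lemma statement defers the actual bounding of the waiting time to "$B_i^w$ is an upper bound on the total waiting time," so this lemma is really a bookkeeping/structural result that reduces the handling-time bound to the waiting-time bound. I would therefore keep the proof short, emphasize the additive decomposition, cite Lemma~\ref{lemma:server_overhead} for the $2\eta_i\epsilon$ term and the task model for $G_i$, and defer the construction of a concrete expression for $B_i^w$ to a subsequent lemma. The risk is asserting additivity without noting that the server's highest-priority status means its own execution of GPU-segment CPU work (the $G_{i,j}^m$ portions and the $\epsilon$ overheads) is never itself preempted, so these components compose cleanly in time rather than interleaving unpredictably — a remark I would include to make the summation rigorous.
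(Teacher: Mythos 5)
Your proposal is correct and follows essentially the same route as the paper's own proof: an additive decomposition into the waiting time $B_i^w$, the accumulated GPU segment durations $G_i$, and the per-request server overhead $2\epsilon$ from Lemma~\ref{lemma:server_overhead}, with the $\eta_i=0$ case dispatched trivially. The extra remarks you include (non-preemptive serialization preventing inflation of $G_i$, and the absence of double-counting with $G_{i,j}^m$) are sound elaborations of points the paper leaves implicit, but they do not change the argument.
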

\begin{proof}
	If $\eta_i>0$, the $j$-th GPU request of $\tau_i$ is handled after some waiting time, and then takes $G_{i,j}+2\epsilon$ to complete the execution (by Lemma~\ref{lemma:server_overhead}). Hence, the maximum handling time of all GPU requests of $\tau_i$ is $B_{i}^w+\sum_{j=1}^{\eta_i} (G_{i,j}+2\epsilon)=B_i^w+G_i+2\eta_i\epsilon$.
	If $\eta_i=0$, $B_i^{gpu}$ is obviously zero. 
\end{proof}

In order to tightly upper-bound the total waiting time $B_i^w$, we adopt the {\em double-bounding approach} used in \cite{Kim_RTS16,Kim_RTAS14}. The double-bounding approach consists of two analysis methods: {\em request-driven} and {\em job-driven}. The request-driven analysis focuses on the worst-case waiting time that can be imposed on each request of $\tau_i$, and computes an upper bound on the total waiting time by adding up all the per-request waiting time. 
On the other hand, the job-driven analysis focuses on the amount of waiting time that can be possibly given by other interfering tasks during $\tau_i$'s job execution, and takes the maximum  of this value as an upper bound of the total waiting time.
These two analyses do not dominate one another. Hence, the double-bounding approach takes the minimum of the two to provide a tighter upper bound. The total waiting time $B_i^w$ for all GPU requests of a task $\tau_i$ is therefore:
\begin{equation} \label{eq:GPU_server_waiting_delay}
B_{i}^{w}=\min(B_{i}^{rd}, B_i^{jd})
\end{equation}
where $B_{i}^{rd}$ and $B_{i}^{jd}$ are the results obtained by the request-driven and job-driven analyses, respectively.\footnote{Our initial work in \cite{Kim_RTCSA17} uses only the request-driven analysis to bound the total waiting time, i.e., $B_i^w=B_i^{rd}$.} Note that $B_i^{rd}$ is the sum of the maximum waiting time for each GPU request of $\tau_i$, i.e., $B_i^{rd}=\sum_{1\le j \le \eta_i}B_{i,j}^{rd}$.

\begin{lemma}
	\label{lemma:GPU_server_waiting_delay_rd}
	{\em [Request-driven analysis]} The maximum waiting time for the $j$-th GPU segment of a task $\tau_i$ under the server-based approach is bounded by the following recurrence:
	\begin{equation} \label{eq:GPU_server_waiting_delay_rd}
	\begin{split}
	B_{i,j}^{rd,n+1}&=\max_{\pi_l<\pi_i\land 1\le k\le \eta_l}(G_{l,k}+\epsilon) 
	+ \sum_{\pi_h > \pi_i\land 1\le k \le \eta_h}\left(\left\lceil{B_{i,j}^{rd,n}\over T_h}\right\rceil +1 \right) (G_{h,k}+\epsilon)
	\end{split}
	\end{equation}
	where $G_{l,k}$ is the maximum duration of the $k$-th GPU segment of $\tau_l$ and $B_{i,j}^{rd,0}=\max_{\pi_l<\pi_i\land 1\le k\le \eta_l}(G_{l,k}+\epsilon)$ (the first term of the equation). 
\end{lemma}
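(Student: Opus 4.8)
The plan is to bound the waiting time experienced by the $j$-th GPU segment of $\tau_i$ by a critical-instant-style argument analogous to the MPCP blocking analysis, but adapted to the server-based setting where the GPU server runs at the system's highest priority. I would fix a job of $\tau_i$ and consider the instant at which its $j$-th GPU request is enqueued at the server; call this the \emph{release} of the GPU segment. By Definition~\ref{def:waiting_time}, the waiting time is the delay until the server dequeues this request and starts executing the corresponding segment. Because the GPU is non-preemptive and the server processes its priority queue in task-priority order, the only things that can delay $\tau_i$'s request are: (a) at most one lower-priority GPU segment that is already in progress on the GPU when $\tau_i$'s request arrives (it cannot be preempted, hence the non-preemptive blocking term), and (b) GPU segments of higher-priority tasks that are either already queued or arrive while $\tau_i$'s request waits, each of which will be served before $\tau_i$'s.

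For step (a), the lower-priority blocking term: any single GPU segment $G_{l,k}$ of a task $\tau_l$ with $\pi_l < \pi_i$ that is in progress contributes its duration plus one unit of server overhead $\epsilon$ — I would justify the single $\epsilon$ (rather than $2\epsilon$) by observing that when $\tau_i$'s request arrives the server has already incurred the ``front'' overhead of the in-progress request, so only the ``back'' overhead (completion/notification, which must finish before the next dequeue) still blocks $\tau_i$; taking the max over all such $\tau_l$ and all their segments gives the first term. For step (b), the higher-priority interference: over a window of length $B_{i,j}^{rd}$, a higher-priority task $\tau_h$ releases at most $\lceil B_{i,j}^{rd}/T_h\rceil + 1$ jobs that can contribute GPU requests ahead of $\tau_i$'s — the ``$+1$'' is the standard carry-in term accounting for a job of $\tau_h$ that arrived just before the window and whose GPU segment is still pending. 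Each such job contributes all $\eta_h$ of its segments, each costing $G_{h,k} + \epsilon$ of GPU-plus-server time before $\tau_i$'s turn. Summing over all higher-priority tasks and all their segments yields the second term, and since $B_{i,j}^{rd}$ appears on both sides we solve by the usual monotone fixed-point iteration starting from $B_{i,j}^{rd,0}$ equal to the lower-priority blocking term (a valid lower bound on the true waiting time, ensuring the iteration converges upward to the least fixed point).

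The main obstacle I anticipate is the careful accounting of the $\epsilon$ overhead terms and the carry-in count, because this is where an off-by-one (or off-by-$\epsilon$) error would break correctness. Specifically I must argue rigorously that (i) at most one in-progress lower-priority segment can block $\tau_i$ — this follows from non-preemptivity and the fact that the server serves one request at a time, but I should state it explicitly; (ii) the server overhead charged per interfering higher-priority segment is exactly one $\epsilon$ within the waiting window and not two, reconciling this with Lemma~\ref{lemma:server_overhead}'s $2\epsilon$ figure by noting that the ``other'' $\epsilon$ of each higher-priority request is charged to that request's own handling time, not to $\tau_i$'s waiting; and (iii) the $\lceil \cdot \rceil + 1$ bound genuinely covers every higher-priority job whose GPU segment can be ordered ahead of $\tau_i$'s within the busy window, including the boundary cases where a job is released exactly at the window's start or end. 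Once these bookkeeping points are nailed down, the recurrence itself is a routine response-time-style fixed point and its convergence follows from monotonicity in $B_{i,j}^{rd,n}$ together with an a priori bound (e.g., the deadline $D_i$, beyond which the taskset is declared unschedulable anyway).
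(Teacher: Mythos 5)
Your proposal is correct and follows essentially the same route as the paper's proof: a single non-preemptive blocking term from the longest lower-priority GPU segment (charged one $\epsilon$ because consecutive requests share a server invocation), plus higher-priority interference counted with the standard $\lceil B_{i,j}^{rd}/T_h\rceil+1$ carry-in bound applied to each of $\tau_h$'s $\eta_h$ segments at cost $G_{h,k}+\epsilon$, resolved by monotone fixed-point iteration. Your explicit reconciliation of the per-request $\epsilon$ accounting with the $2\epsilon$ of Lemma~\ref{lemma:server_overhead} is a slightly more careful bookkeeping remark than the paper offers, but the argument is the same.
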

\begin{proof}
	When $\tau_i$, the task under analysis, makes a GPU request, the GPU may be handling a request from a lower-priority task, which $\tau_i$ has to wait for completion due to the non-preemptive nature of the GPU. Hence, the longest GPU access segment from all lower-priority tasks needs to be considered as the waiting time in the worst case. Here, only one $\epsilon$ of overhead is caused by the GPU server because other GPU requests will be immediately followed and the GPU server needs to be invoked only once between two consecutive GPU requests, as depicted in \figref{GPU_server_based_approach_example}. This is captured by the first term of the equation. 
	
	During the waiting time of $B_{i,j}^{rd}$, higher-priority tasks can make GPU requests to the server. As there can be at most one carry-in request from each higher-priority task $\tau_h$ during $B_{i,j}^{rd}$, the maximum number of GPU requests made by $\tau_h$ is bounded by $\sum_{u=1}^{\eta_h} (\lceil{B_{i,j}^{rd}/ T_h}\rceil+1$). Multiplying each element of this summation by $G_{h,k}+\epsilon$ therefore gives the maximum waiting time caused by the GPU requests of $\tau_h$, which is exactly used as the second term of the equation.
\end{proof}

\begin{lemma}
	\label{lemma:GPU_server_waiting_delay_jd}
	{\em [Job-driven analysis]} The maximum waiting time given by the GPU requests of other tasks during a single job execution of a task $\tau_i$ under the server-based approach is given by:	
	\begin{equation} \label{eq:GPU_server_waiting_delay_jd}
	\begin{split}
	B_{i}^{jd}&=\eta_i \cdot \max_{\pi_l<\pi_i\land 1\le k\le \eta_l}(G_{l,k}+\epsilon) 
	+ \sum_{\pi_h > \pi_i\land 1\le k \le \eta_h}\left(\left\lceil{W_i\over T_h}\right\rceil +1 \right) (G_{h,k}+\epsilon)
	\end{split}
	\end{equation}
	where $W_i$ is the response time of $\tau_i$. 
\end{lemma}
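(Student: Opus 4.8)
The plan is to bound $B_i^{jd}$ by accounting separately for the GPU interference that lower-priority and higher-priority tasks can inflict on the GPU segments of a single job of $\tau_i$, and to do so over the \emph{entire} response-time window of that job rather than per request. The enabling observation is that, since the response time of $\tau_i$ is $W_i$, all $\eta_i$ GPU requests of the job are released and completed within some interval of length at most $W_i$; hence it suffices to count the interfering GPU segments that can possibly occur within such a window.

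First, for the lower-priority contribution, I would argue from the non-preemptive GPU and the priority-ordered server queue: once a GPU segment of $\tau_i$ is enqueued, no subsequently released lower-priority request can be dequeued before it. Therefore each of the $\eta_i$ GPU segments of $\tau_i$ is delayed by at most one lower-priority GPU segment---the one (if any) already in progress at the instant $\tau_i$'s request arrives---and that delay is at most $\max_{\pi_l<\pi_i\land 1\le k\le\eta_l}(G_{l,k}+\epsilon)$, the single $\epsilon$ being the one server invocation needed between two consecutive GPU requests, exactly as argued in Lemma~\ref{lemma:GPU_server_waiting_delay_rd}. Summing over the $\eta_i$ segments yields the first term.

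Second, for the higher-priority contribution, within a window of length $W_i$ each higher-priority task $\tau_h$ can release at most $\lceil W_i/T_h\rceil+1$ jobs, where the ``$+1$'' covers the single carry-in job whose GPU requests may fall partly before the window begins. Each such job issues $\eta_h$ GPU segments, and each of them can push $\tau_i$'s pending GPU work back by $G_{h,k}+\epsilon$; multiplying and summing over all higher-priority GPU segments gives the second term. The point that distinguishes this job-driven bound from (and makes it incomparable to) the request-driven bound is that here each higher-priority job is counted once over the whole window, not once per GPU segment of $\tau_i$. Adding the two contributions gives $B_i^{jd}$; nothing is double-counted, since each interfering GPU segment belongs either to a lower- or a higher-priority task and the two terms are disjoint in scope.

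I expect the main obstacle to be making the two ``at most one'' claims fully rigorous---that a pending GPU segment of $\tau_i$ suffers at most one lower-priority block, and that at most one carry-in higher-priority job intersects the window---and, relatedly, being careful about circularity: $B_i^{jd}$ depends on $W_i$, which through the schedulability recurrence depends on $B_i^{gpu}$ and hence on $B_i^{jd}$. I would resolve this by treating $W_i$ as a given quantity in the statement and evaluating the bound inside the fixed-point iteration of the overall response-time analysis, noting that monotonicity in $W_i$ makes the iteration well-behaved.
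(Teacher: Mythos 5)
Your proposal is correct and follows essentially the same route as the paper's proof: the first term bounds each of the $\eta_i$ segments by one non-preemptive lower-priority blocking of length $\max(G_{l,k}+\epsilon)$, and the second term counts higher-priority GPU requests over the whole window $W_i$ with one carry-in per task, exactly as the paper does by reusing the argument of Lemma~\ref{lemma:GPU_server_waiting_delay_rd}. Your handling of the circular dependence on $W_i$ also matches the paper, which evaluates $B_i^{jd}$ with $W_i^{n-1}$ inside the response-time fixed-point iteration.
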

\begin{proof}
	During the execution of any job of $\tau_i$, in the worst case, every GPU request of $\tau_i$ may wait for the completion of the longest lower-priority GPU request, which is captured by the first term of the equation. In addition, due to the priority queue of the GPU server, higher-priority GPU requests made during the response time of $\tau_i$ ($W_i$) can introduce waiting time to the GPU requests of a single job of $\tau_i$. The second term of the equation upper-bounds the maximum waiting time given by higher-priority GPU requests during $W_i$, and it can be proved in the same manner as in the proof of Lemma~\ref{lemma:GPU_server_waiting_delay_rd}.
\end{proof}


The response time of a task $\tau_i$ is affected by the presence of the GPU server on $\tau_i$'s core. If $\tau_i$ is allocated on a different core than the GPU server, the worst-case response time of $\tau_i$ under the server-based approach is given by:
\begin{equation} \label{eq:GPU_server_task_sched_no_server}
\begin{split}
W_i^{n+1}=&C_i+B^{gpu}_i+\sum_{\tau_h\in \mathbb{P}(\tau_i)\land \pi_h>\pi_i}\Big\lceil{{W_i^n+(W_h-C_h)} \over {T_h}}\Big\rceil C_h
\end{split}
\end{equation}
where $\mathbb{P}(\tau_i)$ is the CPU core on which $\tau_i$ is allocated. The recurrence computation terminates when $W_i^{n+1} = W_i^n$, and $\tau_i$ is schedulable if $W_i^n \le D_i$. To compute $B_i^{jd}$ at the $n$-th iteration, $W_i^{n-1}$ can be used in Eq.~\eqref{eq:GPU_server_waiting_delay_jd}. It is worth noting that, as captured in the third term, the GPU segments of higher-priority tasks do {\em not} cause any direct interference to the normal execution segments of $\tau_i$ because they are executed by the GPU server that runs on a different core.

If $\tau_i$ is allocated on the same core as the GPU server, the worst-case response time of $\tau_i$ is given by:
\begin{equation} \label{eq:GPU_server_task_sched}
\begin{split}
W_i^{n+1}=&C_i+B^{gpu}_i+\sum_{\tau_h\in \mathbb{P}(\tau_i)\land \pi_h>\pi_i}\Big\lceil{{W_i^n+(W_h-C_h)} \over {T_h}}\Big\rceil C_h\\
&+\sum_{\tau_j\ne\tau_i \land \eta_j > 0} \Big\lceil{{W_i^{n}+\{D_j-(G^m_j+2\eta_j\epsilon)\}} \over{T_j}} \Big\rceil(G^m_j+2\eta_j\epsilon)
\end{split}
\end{equation}
where $G_j^m$ is the sum of the WCETs of miscellaneous operations in $\tau_j$'s GPU access segments, i.e., $G_j^m=\sum_{k=1}^{\eta_j}G_{j,k}^m$. 

Under the server-based approach, both, the GPU-using tasks, as well as the GPU server task, can self-suspend. Hence, we use the following lemma given by Bletsas et al.~\cite{Bletsas_TR15} to prove Eqs.~\eqref{eq:GPU_server_task_sched_no_server} and \eqref{eq:GPU_server_task_sched}:
\begin{lemma}
	\label{lemma:self_suspending_task}	
	{\em [from \cite{Bletsas_TR15}]}
	The worst-case response time of a self-suspending task $\tau_i$ is upper-bounded by:
	\begin{equation}\label{eq:self_suspending_task}
	\begin{split}
	W_i^{n+1}=C_i+\sum_{\tau_h\in \mathbb{P}(\tau_i)\land \pi_h>\pi_i} \Big\lceil{{W_i^n+(W_h-C_h)}\over{T_h}}\Big\rceil C_h
	\end{split}
	\end{equation}
\end{lemma}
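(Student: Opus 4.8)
The plan is to reduce the claim to a standard jitter-augmented fixed-priority response-time calculation; since the statement is quoted from prior work, one option is simply to cite \cite{Bletsas_TR15}, but to be self-contained I would reconstruct the argument as follows. The only way a self-suspending higher-priority task $\tau_h$ can interfere with $\tau_i$ more than a non-suspending task of the same WCET is through the ``deferred execution'' (back-to-back) effect: a job of $\tau_h$ may arrive, run briefly, suspend (e.g., during GPU execution), and resume late in its period, while its successor arrives and runs early in the next period, so that two jobs' worth of $C_h$ get packed into a window shorter than $T_h$. First I would make this precise by showing that the CPU work of any job of $\tau_h$ is ``pushed'' relative to its arrival by at most $W_h-C_h$ time units: a job finishes no later than $W_h$ after its arrival and contains $C_h$ units of actual CPU execution, so its execution cannot be deferred beyond $W_h-C_h$. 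I would then argue that, from the viewpoint of any lower-priority task on the same core, $\tau_h$ is indistinguishable from a periodic task with period $T_h$, WCET $C_h$, and release jitter $J_h:=W_h-C_h$.

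Next I would establish the interference bound: over any window of length $L$ inside the level-$i$ busy interval, the total CPU time consumed by $\tau_h$ is at most $\lceil (L+J_h)/T_h\rceil C_h$. The key step is a critical-instant argument — align the window so that a job of $\tau_h$ begins executing exactly at the window's start having arrived $J_h$ earlier; since successive jobs of $\tau_h$ arrive at least $T_h$ apart, at most $\lceil (L+J_h)/T_h\rceil$ of them can have execution overlapping $[0,L)$, each contributing at most $C_h$. Summing over all higher-priority $\tau_h$ on $\tau_i$'s core and adding $\tau_i$'s own demand $C_i$ shows $\tau_i$'s response time is no larger than any solution of $W=C_i+\sum_{\pi_h>\pi_i,\ \tau_h\in\mathbb{P}(\tau_i)}\lceil (W+W_h-C_h)/T_h\rceil C_h$. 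Monotonicity of the right-hand side in $W$ then gives that the iteration in Eq.~\eqref{eq:self_suspending_task}, started from $W_i^0=C_i$, converges to the least fixed point, which is the desired bound; the recurrence is well founded because each $W_h$ for a higher-priority $\tau_h$ is computed first, following the (total) priority order.

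I would also address the self-suspension of $\tau_i$ itself: while $\tau_i$ is suspended the CPU does no work on $\tau_i$, so $\tau_i$'s response window can only lengthen, but this growth is already absorbed on the left-hand side $W$ of the recurrence (the interference term is monotone in $W$), and in the way the lemma is used the CPU-inactive portions of $\tau_i$'s GPU segments are bundled into the demand term that plays the role of $C_i$ in Eqs.~\eqref{eq:GPU_server_task_sched_no_server}--\eqref{eq:GPU_server_task_sched}. Hence no separate suspension term for $\tau_i$ is needed in Eq.~\eqref{eq:self_suspending_task}.

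The main obstacle is the first step — proving that bounding the deferral of a higher-priority self-suspending task by $W_h-C_h$ and treating it as pure release jitter is actually \emph{sound}. This is precisely where several earlier self-suspension analyses were later found to be flawed, so the argument must verify that no adversarial interleaving of suspensions across multiple higher-priority tasks produces more interference than the jitter model admits. The cleanest route is a schedule-transformation lemma: show that any concrete schedule can be rewritten, without decreasing $\tau_i$'s response time, into one in which every higher-priority job's CPU work is contiguous and begins exactly $J_h\le W_h-C_h$ after a strictly periodic arrival pattern, thereby reducing the problem to the classical jittered fixed-priority analysis whose correctness is already known.
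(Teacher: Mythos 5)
The paper does not actually prove this lemma --- it is imported by citation from Bletsas et al.~\cite{Bletsas_TR15} and used as a black box --- so the comparison here is between your reconstruction and the cited source rather than anything written in the paper. Your route is the standard one behind that result: treat each higher-priority self-suspending task $\tau_h$ as a periodic task with WCET $C_h$ and release jitter $J_h=W_h-C_h$ (a job that completes within $W_h$ of its arrival while executing $C_h$ units of CPU work can defer that work by at most $W_h-C_h$), then apply the classical jitter-augmented interference bound $\lceil (W+J_h)/T_h\rceil C_h$. This is correct, and it is one of the self-suspension analyses confirmed sound in the review by Chen et al.~\cite{Chen_TR16}, which the paper itself invokes when noting that $D_h$ may replace $W_h$. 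Two caveats. Your critical-instant count is stated only for the particular alignment in which the first job of $\tau_h$ arrives $J_h$ before the window and begins executing exactly at the window boundary; as written, the claim that jobs overlapping the window arrive in an interval of length $L+J_h$ is not literally true (they may arrive up to $W_h$, not $J_h$, before the window), and the tighter bound holds only because such an earlier arrival necessarily delivers strictly less than $C_h$ inside the window. That accounting is precisely the delicate step you flag as the main obstacle --- it is where the flawed analyses corrected in~\cite{Chen_TR16} went wrong --- so it should be written out rather than deferred to an unproved schedule-transformation lemma. Your final observation is the right one: the recurrence carries no suspension term for $\tau_i$ itself because the paper accounts for $\tau_i$'s CPU-inactive time separately through the additive term $B_i^{gpu}$ in Eqs.~\eqref{eq:GPU_server_task_sched_no_server} and \eqref{eq:GPU_server_task_sched}.
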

Note that $D_h$ can be used instead of $W_h$ in the summation term of Eq.~\eqref{eq:self_suspending_task}~\cite{Chen_TR16}.

\begin{theorem}
	The worst-case response time of a task $\tau_i$ under the server-based approach is given by Eqs.~\eqref{eq:GPU_server_task_sched_no_server} and \eqref{eq:GPU_server_task_sched}.
\end{theorem}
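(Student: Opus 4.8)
The plan is to reduce both formulas to a single well-understood template: the jitter-aware response-time recurrence for self-suspending tasks under partitioned fixed-priority scheduling (Lemma~\ref{lemma:self_suspending_task}, together with the $D_h$ refinement of~\cite{Chen_TR16}), applied to a transformed view of the workload on $\tau_i$'s core. First I would observe that, from $\tau_i$'s own perspective, every one of its GPU access segments is a self-suspension interval: $\tau_i$ hands the request to the server and blocks until it is notified. By Lemma~\ref{lemma:GPU_server_handling_delay}, the total length of these suspension intervals within one job of $\tau_i$ is at most $B_i^{gpu}$, so $\tau_i$ behaves exactly like a sporadic self-suspending task with CPU demand $C_i$ and cumulative suspension at most $B_i^{gpu}$. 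Adding $C_i + B_i^{gpu}$ as the ``own contribution'' term is therefore justified, and there is no double counting, because $B_i^{gpu}$ aggregates only $\tau_i$'s GPU waiting time plus the server-side execution of $\tau_i$'s \emph{own} requests (including the $2\eta_i\epsilon$ overhead, by Lemma~\ref{lemma:server_overhead}), none of which runs at $\tau_i$'s priority.

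Next I would account for the higher-priority interference on $\tau_i$'s core. Every task $\tau_h$ with $\pi_h > \pi_i$ on that core may itself self-suspend (it too uses the GPU server), so I would invoke Lemma~\ref{lemma:self_suspending_task} directly: each contributes $\lceil (W_i + (W_h - C_h))/T_h\rceil C_h$, where the jitter term $W_h - C_h$ absorbs the back-to-back release effect caused by $\tau_h$'s own suspensions. This already yields Eq.~\eqref{eq:GPU_server_task_sched_no_server} for the case in which $\tau_i$ does \emph{not} share a core with the GPU server, once I argue the remaining piece: the GPU segments of higher-priority tasks cause \emph{no} direct CPU interference on $\tau_i$, because the server that executes them is the highest-priority task on a \emph{different} core, and the only cross-core coupling — contention for the single GPU — is already fully captured inside $B_i^{gpu}$ via $B_i^{w}$ (Eq.~\eqref{eq:GPU_server_waiting_delay}).

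For Eq.~\eqref{eq:GPU_server_task_sched} I would add the effect of the GPU server residing on $\tau_i$'s core. The server runs at the system-wide highest priority, so it preempts $\tau_i$ whenever it is active, and its CPU-active workload on behalf of any other GPU-using task $\tau_j$ ($\tau_j\ne\tau_i$, $\eta_j>0$) is exactly the miscellaneous operations plus server overhead, i.e.\ $G_j^m + 2\eta_j\epsilon$ per job of $\tau_j$ (the pure-GPU part $G_j^e$ consumes no CPU and is excluded). I would model this as an interfering sporadic task $\hat\tau_j$ with period $T_j$, WCET $G_j^m + 2\eta_j\epsilon$, and release jitter at most $D_j - (G_j^m + 2\eta_j\epsilon)$, executing above $\tau_i$'s priority; substituting this into the standard jitter-aware interference term $\lceil (W_i + J_j)/T_j\rceil\,(G_j^m + 2\eta_j\epsilon)$ produces the last summation of Eq.~\eqref{eq:GPU_server_task_sched}. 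I would close by noting that both recurrences are monotone in $W_i^n$ and start from a safe lower estimate, hence converge, with $W_i^n \le D_i$ as the schedulability condition, and that $W_i^{n-1}$ may be reused when evaluating $B_i^{jd}$ inside $B_i^{gpu}$.

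The main obstacle is justifying the jitter bound $D_j - (G_j^m + 2\eta_j\epsilon)$ for the server's work on behalf of $\tau_j$. I need to show that, within one job of $\tau_j$, all instants at which the server executes $\tau_j$'s miscellaneous operations lie in a window of length at most $D_j$ relative to that job's release: this holds because such work can only occur between the release of $\tau_j$'s job and its completion, and a schedulable $\tau_j$ completes within $D_j$, so the earliest such work can begin and the latest it can finish differ by no more than $D_j$; subtracting the work's own length $G_j^m + 2\eta_j\epsilon$ gives the deferral, i.e.\ the jitter. I also need to confirm that this bound correctly upper-bounds the carry-in contribution from an older job of $\tau_j$ whose server work spills into $\tau_i$'s analysis window — which is precisely what the ``$+$ jitter inside the ceiling'' accounting guarantees — and that the server's own self-suspension (it blocks during $G_j^e$) adds no pessimism beyond this deferral term, since a highest-priority task resumes immediately upon wake-up.
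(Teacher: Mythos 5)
Your proposal is correct and follows essentially the same route as the paper's own proof: add $C_i + B_i^{gpu}$ for the task's own demand and suspension, apply the self-suspension recurrence of Lemma~\ref{lemma:self_suspending_task} with jitter $W_h - C_h$ for same-core higher-priority tasks, and (when the server shares $\tau_i$'s core) model the server's CPU-active work for each other GPU-using task as a deferred workload of size $G_j^m + 2\eta_j\epsilon$ with jitter $D_j - (G_j^m + 2\eta_j\epsilon)$. Your write-up is in fact more explicit than the paper's on two points it leaves implicit — the absence of double counting between $B_i^{gpu}$ and the interference terms, and the justification that the server's work on behalf of a job of $\tau_j$ is confined to a window of length $D_j$ — but these are elaborations of the same argument, not a different one.
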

\begin{proof}
	To account for the maximum GPU request handling time of $\tau_i$, $B_i^{gpu}$ is added in both Eqs.~\eqref{eq:GPU_server_task_sched_no_server} and \eqref{eq:GPU_server_task_sched}. In the ceiling function of the third term of both equations, $(W_h-C_h)$ accounts for the self-suspending effect of higher-priority GPU-using tasks (by Lemma~\ref{lemma:self_suspending_task}). With these, Eq.\eqref{eq:GPU_server_task_sched_no_server} upper-bounds the worst-case response time of $\tau_i$ when it is allocated on a different core than the GPU server.
	
	The main difference between Eq.~\eqref{eq:GPU_server_task_sched} and Eq.~\eqref{eq:GPU_server_task_sched_no_server} is the last term, which captures the worst-case interference from the GPU server task. The execution time of the GPU server task is bounded by summing up the worst-case miscellaneous operations and the server overhead caused by GPU requests from all other tasks ($G^m_j+2\eta_j\epsilon$). Since the GPU server self-suspends during CPU-inactive time intervals, adding $\{D_j\!-\!(G^m_j+2\eta_j\epsilon)\}$ to $W_i^n$ in the ceiling function captures the worst-case self-suspending effect (by Lemma~\ref{lemma:self_suspending_task}). These factors are exactly captured by the last term of Eq.~\eqref{eq:GPU_server_task_sched}. Hence, it upper-bounds task response time in the presence of the GPU server. 	
\end{proof}

\subsection{Task Allocation with the GPU Server}

As shown in Eqs.~\eqref{eq:GPU_server_task_sched} and \eqref{eq:GPU_server_task_sched_no_server}, the response time of a task $\tau_i$ with the server-based approach is affected by the presence of the GPU server on $\tau_i$'s core. This implies that, in order to achieve better schedulability, the GPU server should be considered when allocating tasks to cores. 

Under partitioned scheduling, finding the optimal task allocation can be modeled as the bin-packing problem, which is known to be NP-complete~\cite{Johnson_74}. Hence, bin-packing heuristics, such as first-fit decreasing and worst-fit decreasing, have been widely used as practical solutions to the task allocation problem. Such heuristics first sort tasks in decreasing order of utilization and then allocate them to cores. In case of the GPU server, its utilization depends on the frequency and length of miscellaneous operations of GPU access segments. Specifically, the utilization of the GPU server is given by:
\begin{equation} \label{eq:GPU_server_util}
\begin{split}
U_{server}=\sum_{\forall \tau_i:\eta_i>0}{{G_i^m+2\eta_i\epsilon}\over{T_i}}
\end{split}
\end{equation}
With this utilization, the GPU server can be sorted and allocated together with other regular tasks using conventional bin-packing heuristics. We will use this approach for schedulability experiments in Section~\ref{schedulability_experiments}.

\section{Evaluation}
\label{evaluation}

This section provides our experimental evaluation of the two different approaches for GPU access control. We first present details about our implementation and describe case study results on a real embedded platform. Next, we explore the impact of these approaches on task schedulability with randomly-generated tasksets, by using parameters based on the practical overheads measured from our implementation.

\subsection{Implementation}
\label{implementation}

We implemented prototypes of the synchronization-based and the server-based approaches on a SABRE Lite board~\cite{SabreLite}. The board is equipped with an NXP i.MX6 Quad SoC that has four ARM Cortex-A9 cores and one Vivante GC2000 GPU. We ran an NXP Embedded Linux kernel version 3.14.52 patched with Linux/RK~\cite{LinuxRK} version 1.6\footnote{Linux/RK is available at \url{http://rtml.ece.cmu.edu/redmine/projects/rk/}.}, and used the Vivante v5.0.11p7.4 GPU driver along with OpenCL 1.1 (Embedded Profile) for general-purpose GPU programming. We also configured each core to run at its maximum frequency, 1~GHz.

Linux/RK provides a kernel-level implementation of MPCP which we used to implement the synchronization-based approach. Under our implementation, each GPU-using task first acquires an MPCP-based lock, issues memory copy and GPU kernel execution requests in an asynchronous manner, and uses OpenCL events to busy-wait on the CPU till the GPU operation completes, before finally releasing the lock. 

To implement the server-based approach, we set up shared memory regions between the server task and each GPU-using task, which are used to share GPU input/output data. POSIX signals are used by the GPU-using tasks and the server to notify GPU requests and completions, respectively. The server has an initialization phase, during which, it initializes shared memory regions and obtains GPU kernels from the GPU binaries (or source code) of each task. Subsequently, the server uses these GPU kernels whenever the corresponding task issues a GPU request. As the GPU driver allows suspensions during GPU requests, the server task issues memory copy and GPU kernel execution requests in an asynchronous manner, and suspends by calling the \texttt{clFinish()} API function provided by OpenCL.

\smallskip\noindent\textbf{GPU Driver and Task-Specific Threads.} The OpenCL implementation on the i.MX6 platform spawns user-level threads in order to handle GPU requests and to notify completions. Under the synchronization-based approach, OpenCL spawns multiple such threads for each GPU-using task, whereas under the server-based approach, such threads are spawned only for the server task. To eliminate possible scheduling interference, the GPU driver process, as well as the spawned OpenCL threads, are configured to run at the highest real-time priority in all our experiments.

\subsection{Practical Evaluation}
\label{case_study}
\smallskip\noindent\textbf{Overheads.}
We measured the practical worst-case overheads for both the approaches in order to perform schedulability analysis. Each worst-case overhead measurement involved examining 100,000 readings of the respective operations measured on the i.MX6 platform. \figref{mpcp_overhead} shows the mean and 99.9th percentile of the MPCP lock operations and \figref{server_overhead} shows the same for server related overheads.

Under the synchronization-based approach with MPCP, overhead occurs while acquiring and releasing the GPU lock. Under the server-based approach, overheads involve waking up the server task, performing priority queue operations (i.e., server execution delay), and notifying completion to wake up the GPU-using task after finishing GPU computation.
 
To safely take into consideration the worst-case overheads, we use the 99.9th percentile measurements for each source of delay in our experiments. This amounts to a total of 14.0~$\mu$s lock-related delay under the synchronization-based approach, and a total of 44.97~$\mu$s delay for the server task under the server-based approach.


\begin{figure}[t]
	\centering
	\subfloat{
		\includegraphics[width=0.55\columnwidth]{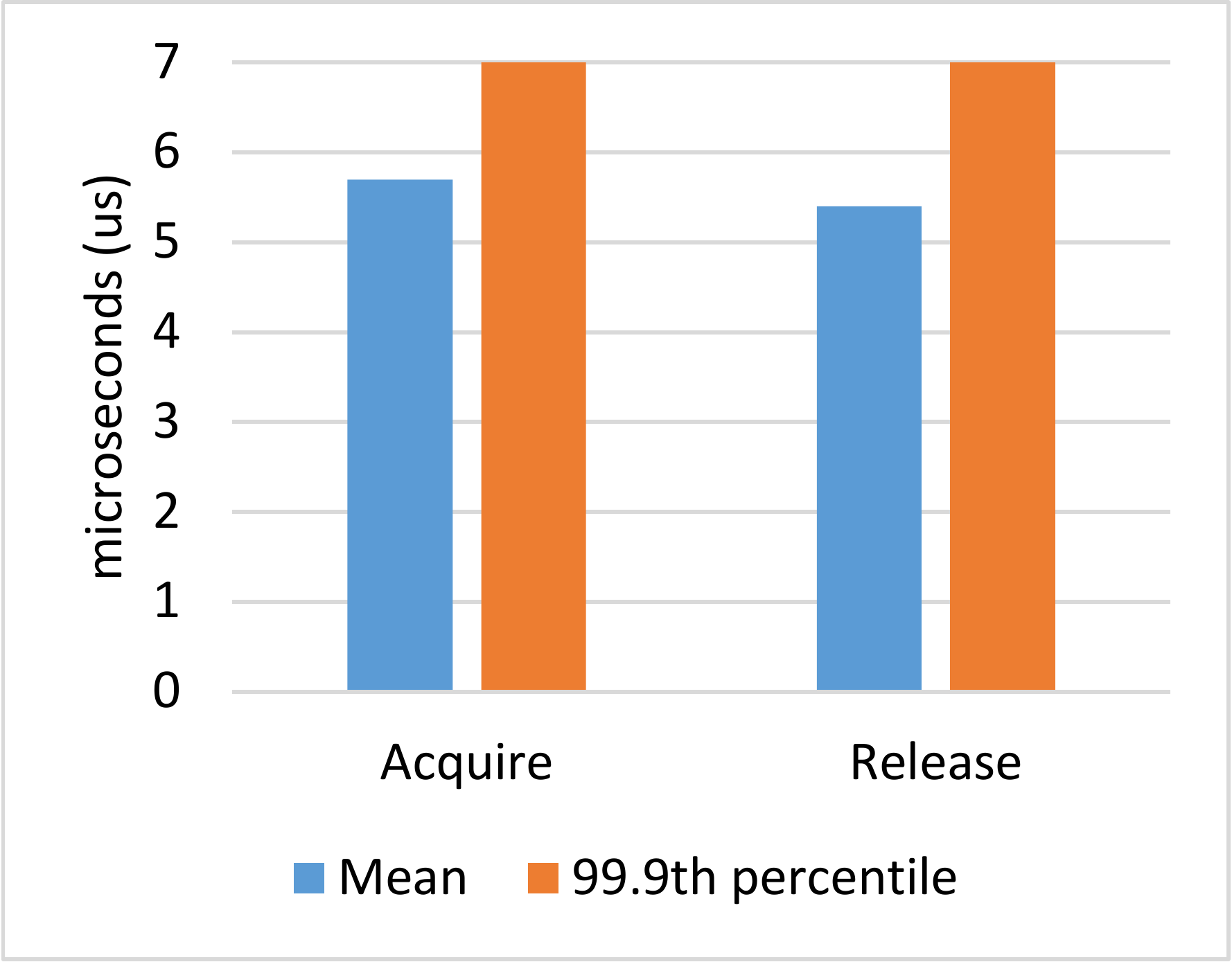}
	}\\
	\caption{MPCP lock overhead}
	\label{fig:mpcp_overhead}
\end{figure}

\begin{figure}[t]
	\centering
	\subfloat{
		\includegraphics[width=0.7\columnwidth]{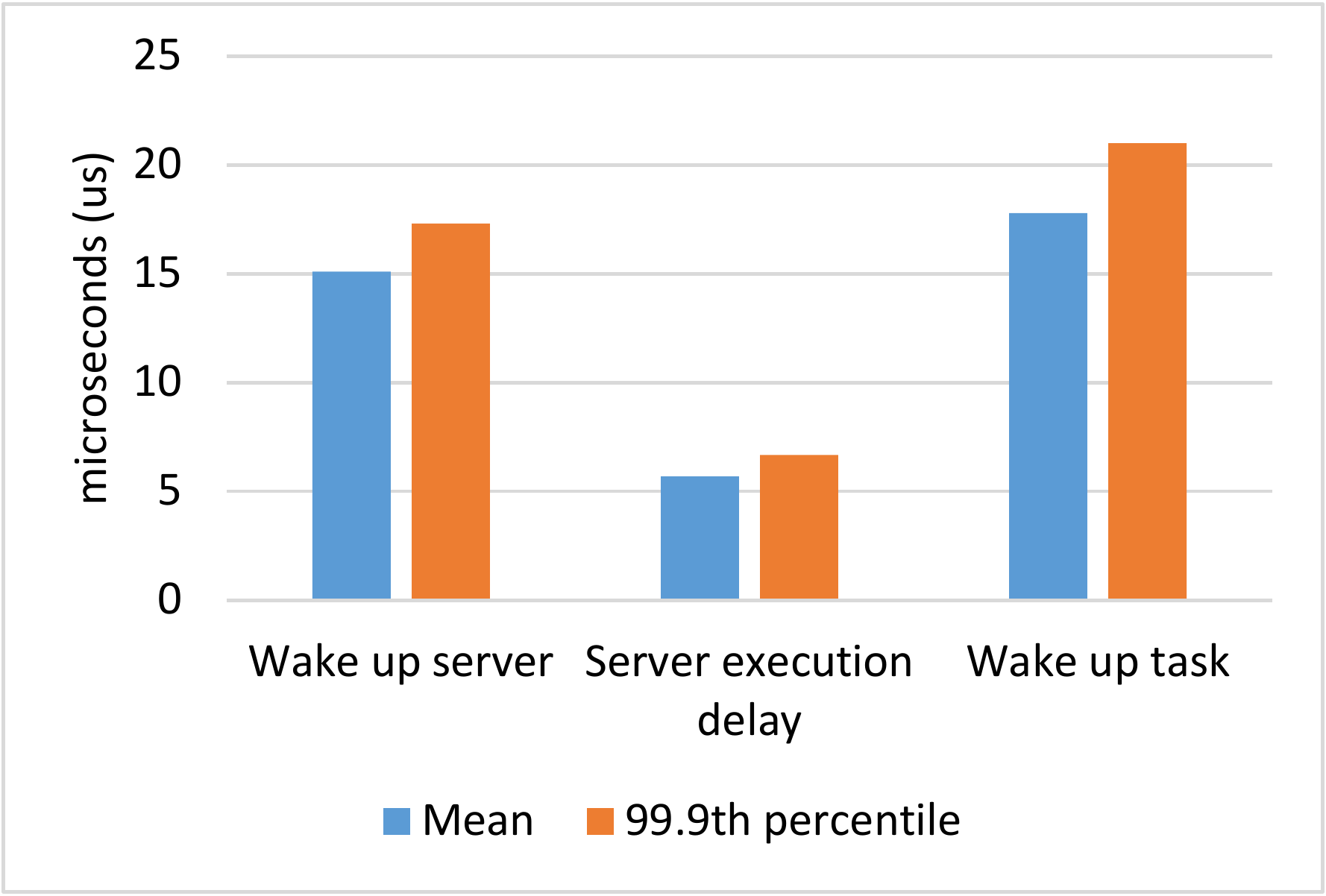}
	}
	\caption{Server task overheads}
	\label{fig:server_overhead}
\end{figure}

\begin{table*}[t]
\centering
\footnotesize
\caption{Tasks used in the case study}
\label{tab:case_study_tasks}
\begin{tabular}{|c|c|c|c|c|c|c|c|}
\hline
\textbf{Task $\tau_i$} & \textbf{Task name} & \textbf{$C_i$ (in ms)} & \textbf{$\boldsymbol{\eta_i}$} & \textbf{$G_i$ (in ms)} & \textbf{$T_i = D_i$ (in ms)} & \textbf{Core} & \textbf{Priority} \\ \hline
$\tau_1$ & \texttt{workzone}      & 20                 & 2            & $G_{1,1}$ = 95, $G_{1,2}$ = 47   & 300                    & 0             & 70                \\ \hline
$\tau_2$ & \texttt{cpu\_matmul1}  & 215                & 0            & 0                  & 750                    & 0             & 67                \\ \hline
$\tau_3$ & \texttt{cpu\_matmul2}  & 102                & 0            & 0                  & 300                    & 1             & 69                \\ \hline
$\tau_4$ & \texttt{gpu\_matmul1}  & 0.15               & 1            & $G_{4,1}$ = 19                 & 600                    & 1             & 68                \\ \hline
$\tau_5$ & \texttt{gpu\_matmul2}  & 0.15               & 1            & $G_{5,1}$ = 38                 & 1000                   & 1             & 66                \\ \hline
\end{tabular}
\end{table*}

\begin{figure*}[t]
	\centering
	\subfloat[Synchronization-based Approach (MPCP)]{\label{fig:case_study_mpcp}
		\includegraphics[width=1\textwidth]{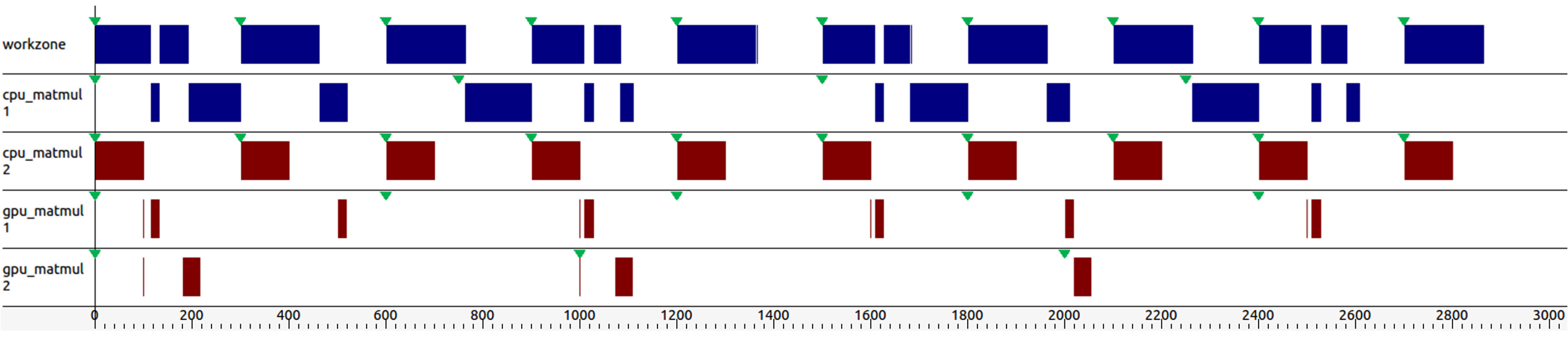}
	}\\
	\subfloat[Server-based Approach]{\label{fig:case_study_server}
		\includegraphics[width=1\textwidth]{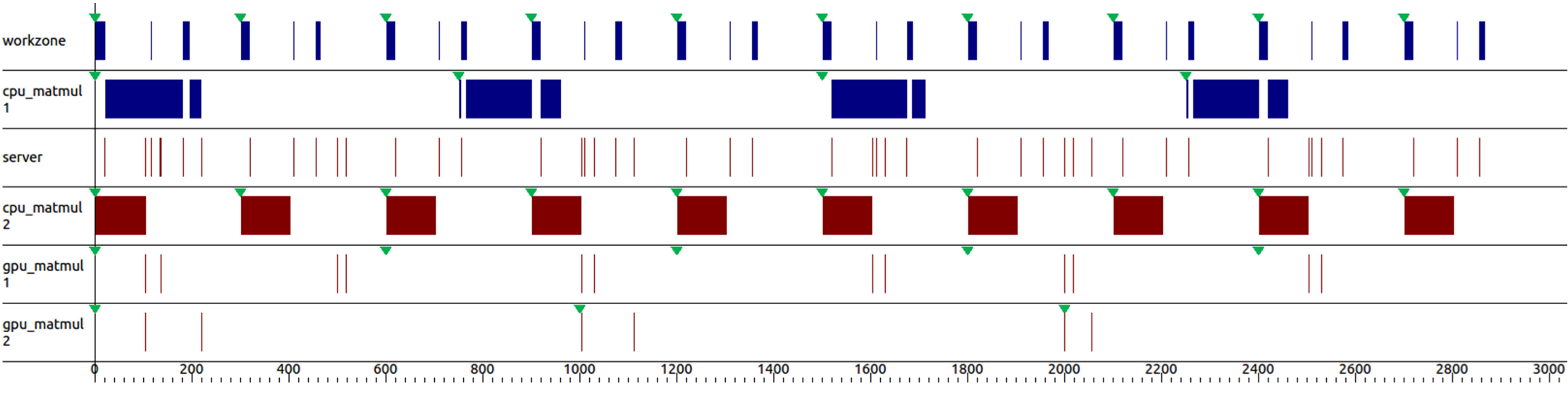}
	}
	\caption{Task execution timeline during one hyperperiod (3,000~ms)}
	\label{fig:case_study_result}
\end{figure*}

\smallskip\noindent\textbf{Case Study.}
We present a case study motivated by the software system of the self-driving car developed at CMU~\cite{Wei_IV13}. Among various algorithmic tasks of the car, we chose a GPU accelerated version of the workzone recognition algorithm~\cite{Lee_ITSC13} (\texttt{workzone}) that periodically processes images collected from a camera. Two GPU-based matrix multiplication tasks (\texttt{gpu\_matmul1} and \texttt{gpu\_matmul2}), and two CPU-bound tasks (\texttt{cpu\_matmul1} and \texttt{cpu\_matmul2}) are also used to represent a subset of other tasks of the car. Unique task priorities are assigned based on the Rate-Monotonic policy~\cite{Liu_Layland} and each task is pinned to a specific core as described in \tableref{case_study_tasks}. Of these, the \texttt{workzone} task has two GPU segments per job whereas both the GPU-based matrix multiplication tasks have a single GPU segment. Under the server-based approach, the \texttt{server} task is pinned to CPU Core 1 and is run with real-time priority 80. In order to avoid unnecessary interference while recording traces, the task-specific OpenCL threads are pinned on a separate CPU core for both approaches. All tasks are released at the same time using Linux/RK APIs. CPU scheduling is performed using the \texttt{SCHED\_FIFO} policy, and CPU execution traces are collected for one hyperperiod (3,000 ms) as shown in~\figref{case_study_result}. 

The CPU-execution traces for the synchronization and server-based approaches are shown in \figsubref{case_study_mpcp} and \figsubref{case_study_server}, respectively. Tasks executing on Core 0 are shown in blue whereas tasks executing on Core 1 are shown in red. It is immediately clear that the server-based approach allows suspension of tasks while they are waiting for the GPU request to complete. In particular, we make the following key observations from the case study:

\begin{enumerate}
	\item Under the synchronization-based approach, tasks suspend when they wait for the GPU lock to be released, but they do not suspend while using the GPU. On the contrary, under the server-based approach, tasks suspend even when their GPU segments are being executed. The results show that our proposed server-based approach can be successfully implemented and used on a real platform.
	\item The response time of \texttt{cpu\_matmul1} under the synchronization-based approach is significantly larger than that under the server-based approach, i.e., 520.68~ms vs. 219.09~ms in the worst case, because of the busy-waiting problem discussed in Section~\ref{GPU_limitations_of_synchronization_approach}. 
\end{enumerate}

\subsection{Schedulability Experiments}
\label{schedulability_experiments}

\begin{table}[t]
	\centering
	
	{
		\footnotesize
		\caption[Base parameters for GPGPU access control experiments]{Base parameters for taskset generation}\label{tab:GPU_taskset_param}
		\begin{tabular}{l|c}
			\hline
			\textbf{Parameters} & \textbf{Values}\\\hline
			Number of CPU cores ($N_P$) & 4, 8\\
			Number of tasks ($n$)& [$2 N_P$, $5 N_P$]\\
			Task utilization ($U_i$) & [0.05, 0.2] \\
			Task period and deadline ($T_i=D_i$) & [30, 500] ms\\
			Percentage of GPU-using tasks & [10, 30] \%\\
			Ratio of GPU segment len. to normal WCET ($G_i/C_i$)	& [10, 30] \%\\
			Number of GPU segments per task ($\eta_i$) & [1, 3]\\			
			Ratio of misc. operations in $G_{i,j}$ ($G_{i,j}^m/G_{i,j}$)& [10, 20] \%\\
			GPU server overhead ($\epsilon$) & 50 $\mu$s\\
			\hline
		\end{tabular}
	}
\end{table}

\noindent\textbf{Taskset Generation.}
We used 10,000 randomly-generated tasksets for each experimental setting. Unless otherwise mentioned, the base parameters given in \tableref{GPU_taskset_param} are used for taskset generation. The GPU-related parameters are inspired from the observations from our case study and the GPU workloads used in prior work~\cite{Kato_ATC11, Kato_ATC12}. Other task parameters are similar to those used in the literature.
Systems with four and eight CPU cores ($N_P=\{4, 8\}$) are considered. 
For each taskset, $n$ tasks are generated where $n$ is uniformly distributed over $[2N_P, 5N_P]$. 
A subset of the generated tasks is chosen at random, corresponding to the specified percentage of GPU-using tasks, to include GPU segments. Task period $T_i$ and utilization $U_i$ are randomly selected from the defined minimum and maximum period ranges. Task deadline $D_i$ is set equal to $T_i$. 
Recall that the utilization of a task $\tau_i$ is defined as $U_i=(C_i+G_i)/T_i$. If $\tau_i$ is a CPU-only task, $C_i$ is set to $U_i\cdot T_i$ and $G_i$ is set to zero. If $\tau_i$ is a GPU-using task, the given ratio of the accumulated GPU segment length to the WCET of normal segments is used to determine the values of $C_i$ and $G_i$. $G_i$ is then split into $\eta_i$ random-sized pieces, where $\eta_i$ is the number of $\tau_i$'s GPU segments chosen randomly from the specified range. For each GPU segment $G_{i,j}$, the values of $G_{i,j}^e$ and $G_{i,j}^m$ are determined by the ratio of miscellaneous operations given in \tableref{GPU_taskset_param}, assuming $G_{i,j}=G_{i,j}^e+G_{i,j}^m$.
Finally, task priorities are assigned by the Rate-Monotonic policy~\cite{Liu_Layland}, with arbitrary tie-breaking.

\smallskip\noindent\textbf{Results.}
We captured the percentage of schedulable tasksets where all tasks met their deadlines. For the synchronization-based approach, two multiprocessor locking protocols are used: \texttt{MPCP} and \texttt{FMLP+}. Schedulability under MPCP and FMLP+ is tested using the analysis developed by Lakshmanan et al.~\cite{Lakshmanan_RTSS09} and Brandenburg~\cite{Brandenburg_11}\footnote{We used the FMLP+ analysis for preemptive partitioned fixed-priority scheduling given in Section 6.4.3 of \cite{Brandenburg_11}.}, respectively. Both analyses are appropriately modified based on the correction by Chen et al.~\cite{Chen_TR16}. 
We considered both zero and non-zero locking overhead under the synchronization-based approach, but there was no appreciable difference between them. Hence, only the results with zero overhead are presented in the paper. 
Tasks are allocated to cores by using the worst-fit decreasing (WFD) heuristic in order to balance the load across cores. Under the server-based approach, the GPU server is allocated along with other tasks by WFD. Eqs.~\eqref{eq:GPU_server_task_sched_no_server} and \eqref{eq:GPU_server_task_sched} are used for task schedulability tests. We set the GPU server overhead~$\epsilon$ to 50~$\mu$s, which is slightly larger than the measured overheads from our implementation presented in Section~\ref{case_study}. Two versions of the GPU server are considered to show the analytical improvement of this work: \texttt{Server-RD}, which is our initial work~\cite{Kim_RTCSA17} using only the request-driven approach, and \texttt{Server-RD+JD}, which is the one newly proposed in this paper. Unless otherwise mentioned, we will limit our focus to Server-RD+JD and call it as the server-based approach.

\begin{figure}[t]
	\centering
	\subfloat[$N_P=4$]{\label{fig:expr_gpu_segment_length_4cores}
		\includegraphics[width=0.48\columnwidth]{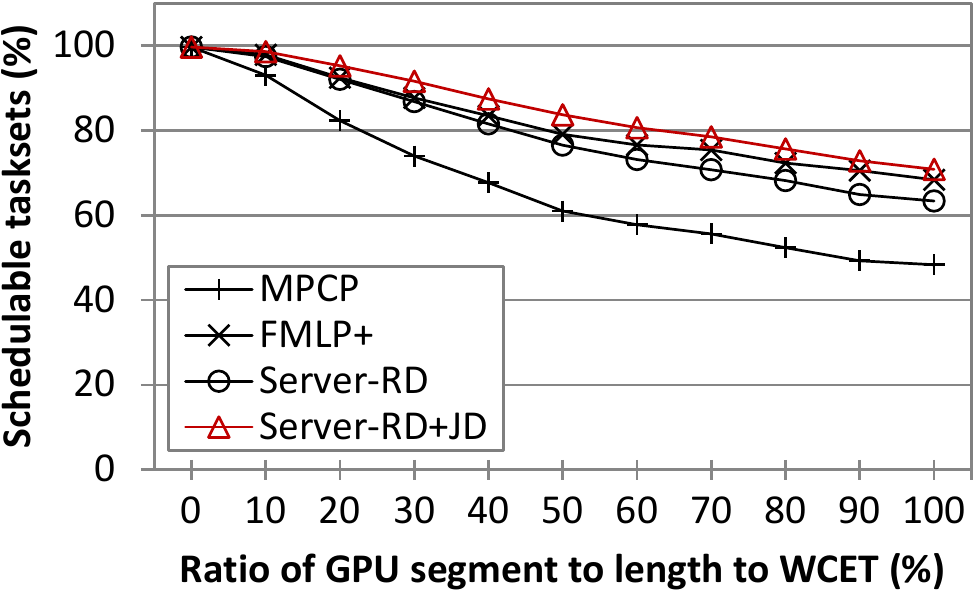}
	}
	\subfloat[$N_P=8$]{\label{fig:expr_gpu_segment_length_8cores}
		\includegraphics[width=0.48\columnwidth]{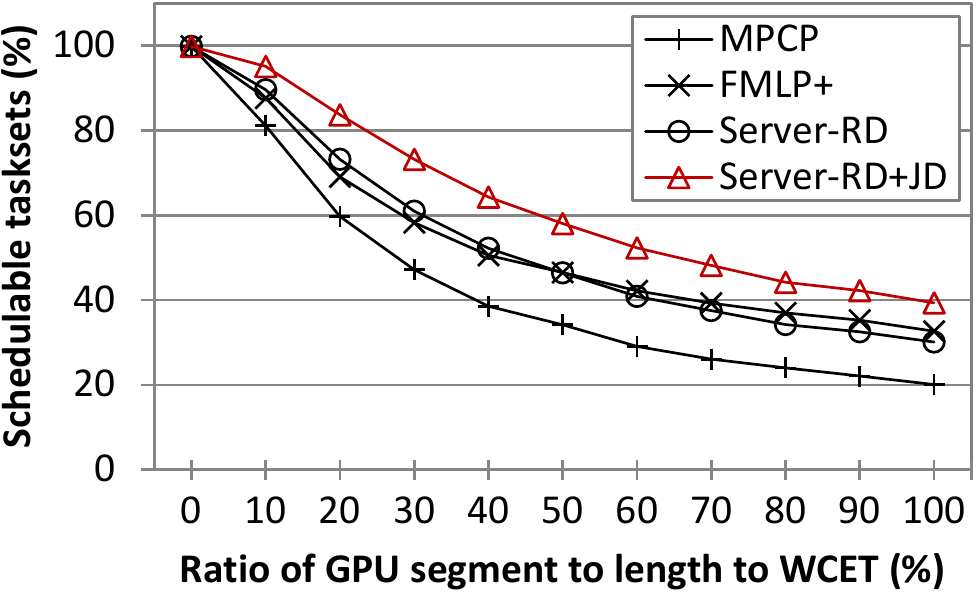}
	}
	\caption{Schedulability w.r.t. the GPU segment length}
	\label{fig:expr_gpu_segment_length}
\end{figure}

\figref{expr_gpu_segment_length} shows the percentage of schedulable tasksets as the ratio of the accumulated GPU segment length ($G_i$) increases. 
In general, the percentage of schedulable tasksets is higher when $N_P=4$, compared to when $N_P=8$. This is because the GPU is contended for by more tasks as the number of cores increases.
The server-based approach outperforms both of the synchronization-based approaches (MPCP and FMLP+) in all cases of \figref{expr_gpu_segment_length}. This is mainly due to the fact that the server-based approach allows other tasks to use the CPU while the GPU is being used. While MPCP generally performs worse than FMLP+ in our experiments, we suspect that this is due to the pessimism in the analysis~\cite{Lakshmanan_RTSS09} that computes an upper bound by the sum of the maximum per-request delay, similarly to the request-driven analysis shown in Eq.~\ref{eq:GPU_server_waiting_delay_rd}. If it is extended with the job-driven analysis like Eq.~\ref{eq:GPU_server_waiting_delay_jd}, its performance will likely be improved.

\begin{figure}[t]
	\centering
	\subfloat[$N_P=4$]{\label{fig:expr_gpu_using_tasks_4cores}
		\includegraphics[width=0.48\columnwidth]{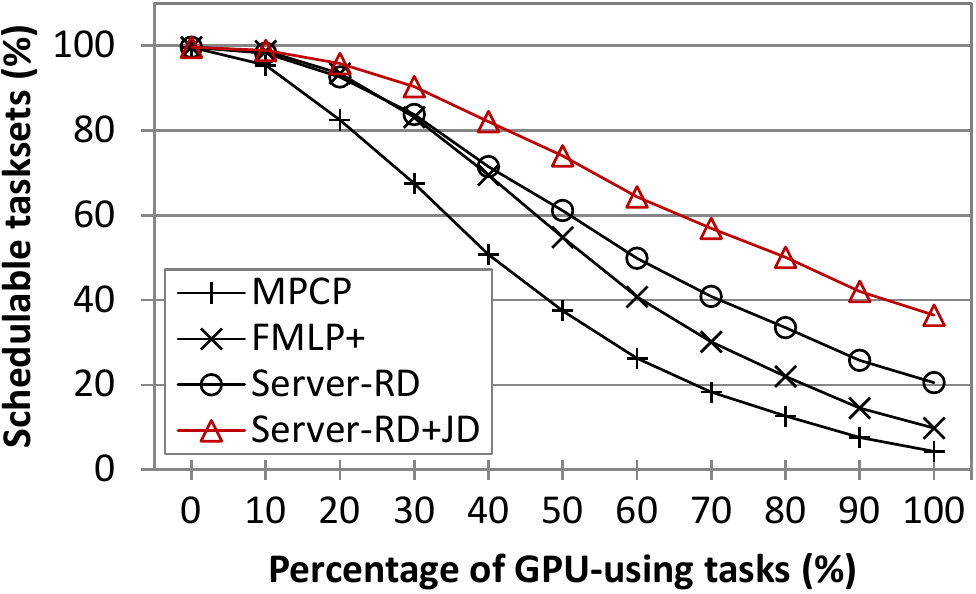}
	}
	\subfloat[$N_P=8$]{\label{fig:expr_gpu_using_tasks_8cores}
		\includegraphics[width=0.48\columnwidth]{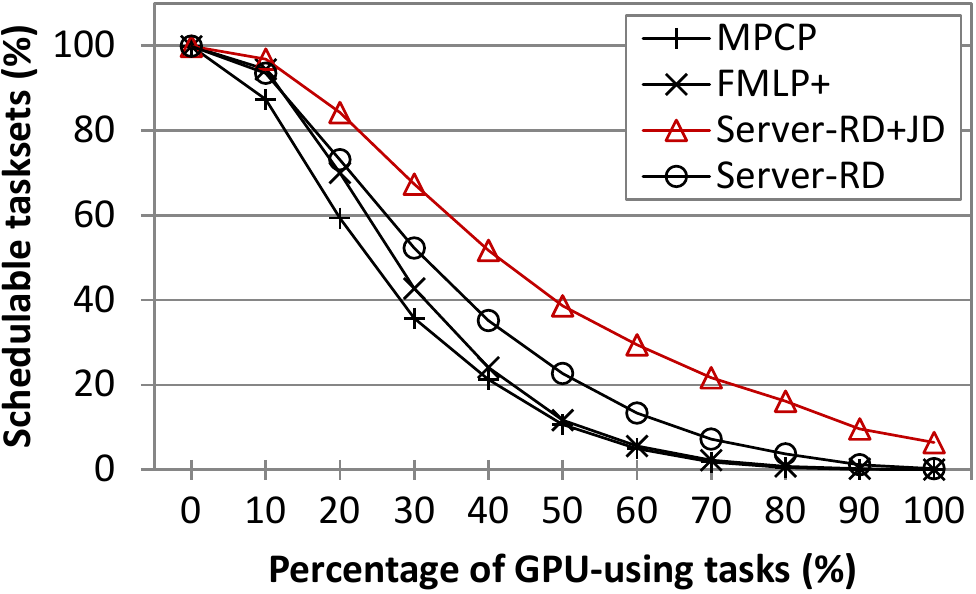}
	}
	\caption{Schedulability w.r.t. the percentage of GPU-using tasks}
	\label{fig:expr_gpu_using_tasks}
\end{figure}

\figref{expr_gpu_using_tasks} shows the percentage of schedulable tasksets as the percentage of GPU-using tasks increases. The left-most point on the x-axis represents that all tasks are CPU-only tasks, and the right-most point represents that all tasks access the GPU. Under all approaches, the percentage of schedulable tasksets reduces as the percentage of GPU-using tasks increases. However, the server-based approach significantly outperforms the other approaches, with as much as 38\% and 27\% more tasksets being schedulable than MPCP and FMLP+, respectively, when the percentage of GPU-using tasks is 70\% and $N_P=4$.

\begin{figure}[t]
	\centering
	\subfloat[$N_P=4$]{\label{fig:expr_nr_tasks_4cores}
		\includegraphics[width=0.48\columnwidth]{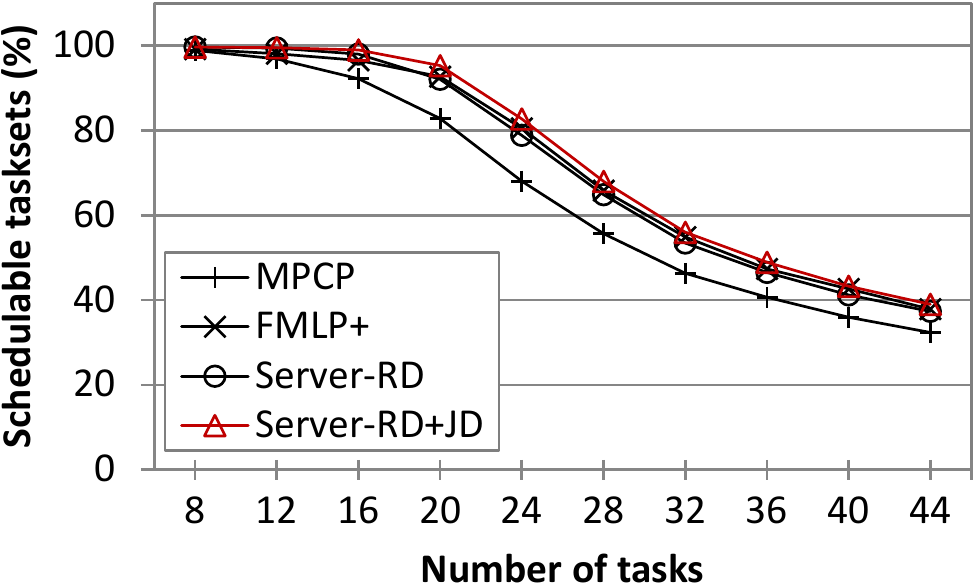}
	}
	\subfloat[$N_P=8$]{\label{fig:expr_nr_tasks_8cores}
		\includegraphics[width=0.48\columnwidth]{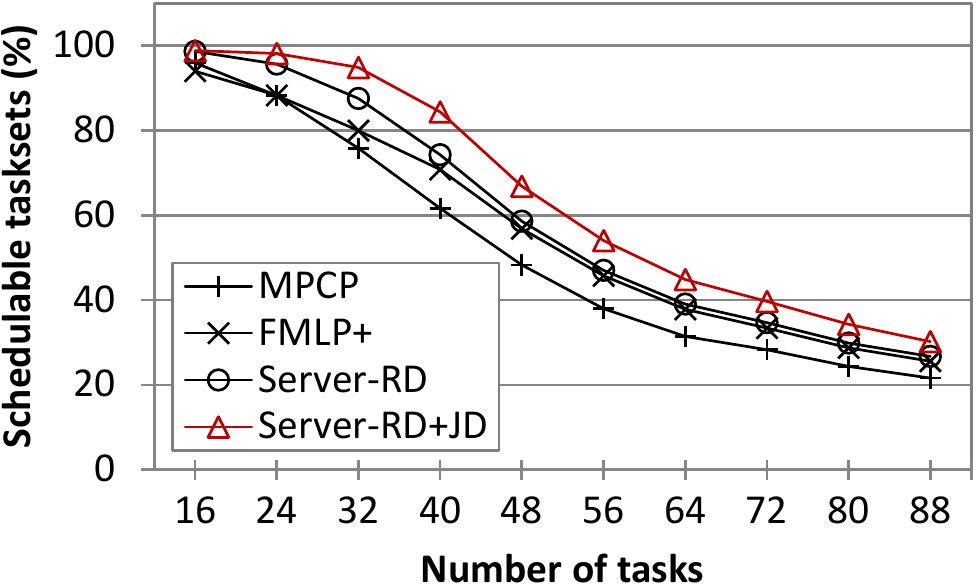}
	}
	\caption{Schedulability w.r.t. the number of tasks}
	\label{fig:expr_nr_tasks}
\end{figure}

\begin{figure}[t]
	\centering
	\subfloat[$N_P=4$]{\label{fig:expr_nr_gpu_segments_4cores}
		\includegraphics[width=0.48\columnwidth]{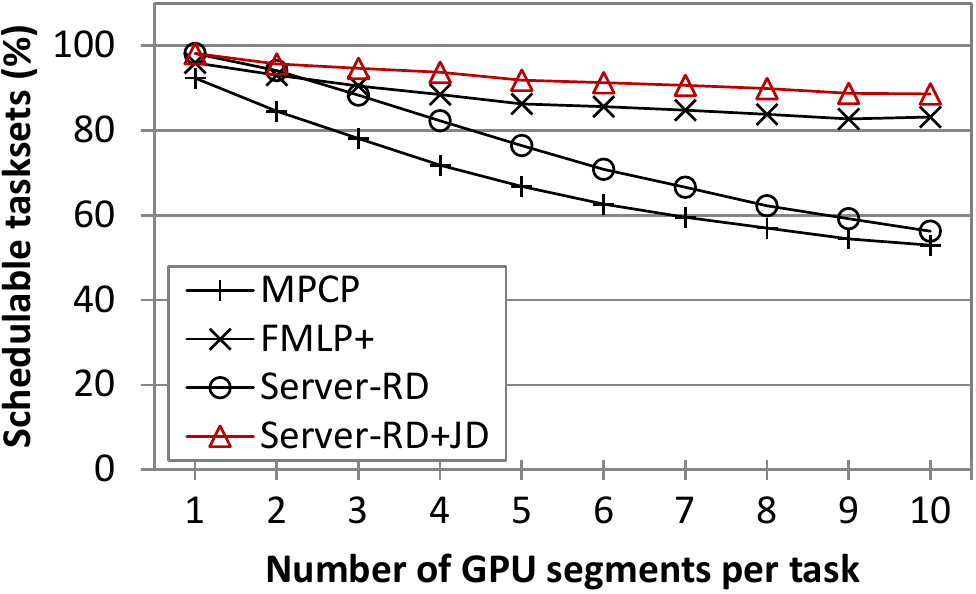}
	}
	\subfloat[$N_P=8$]{\label{fig:expr_nr_gpu_segments_8cores}
		\includegraphics[width=0.48\columnwidth]{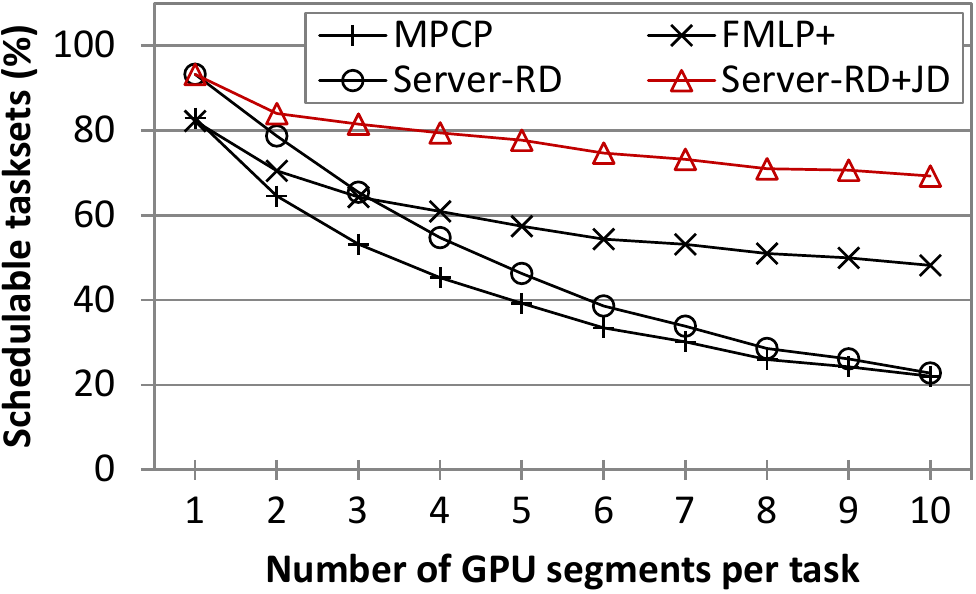}
	}
	\caption{Schedulability w.r.t. the number of GPU segments}
	\label{fig:expr_nr_gpu_segments}
\end{figure}

The benefit of the server-based approach is also observed with changes in other task parameters. In~\figref{expr_nr_tasks}, the percentage of schedulable tasksets is illustrated as the number of tasks increases. 
While the difference in schedulability between the server and FMLP+ is rather small when $N_P=4$, it becomes larger when $N_P=8$. 
This happens because the total amount of the CPU-inactive time in GPU segments increases as more tasks are considered. A similar observation is also seen in Figure~\ref{fig:expr_nr_gpu_segments}, where the number of GPU access segments per task varies. Here, the different is much noticeable as more GPU segments create larger CPU-inactive time, which is favorable to the server-based approach.

\begin{figure}[t]
	\centering
	\subfloat[$N_P=4$]{\label{fig:expr_bimodal_4cores}
		\includegraphics[width=0.48\columnwidth]{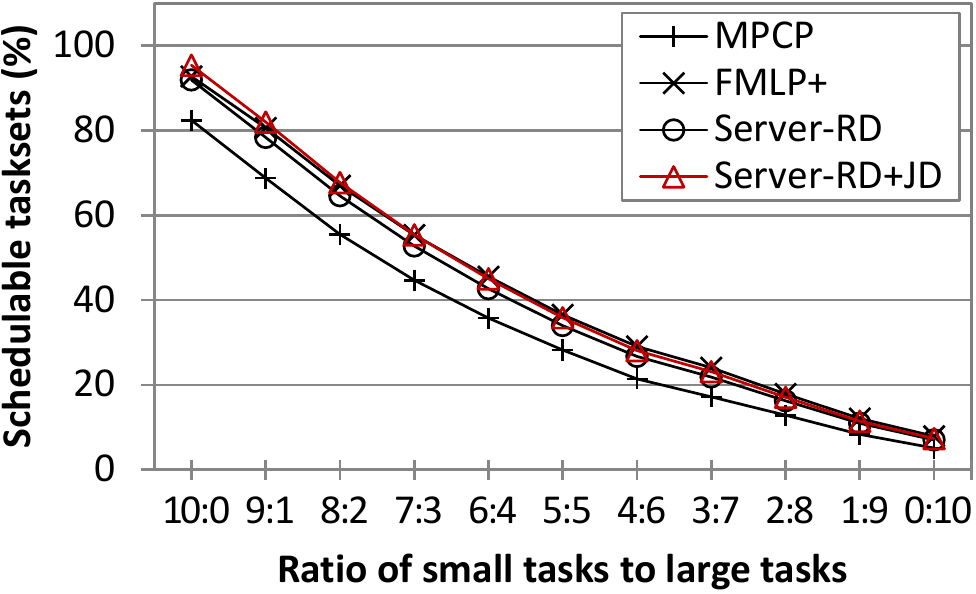}
	}
	\subfloat[$N_P=8$]{\label{fig:expr_bimodal_8cores}
		\includegraphics[width=0.48\columnwidth]{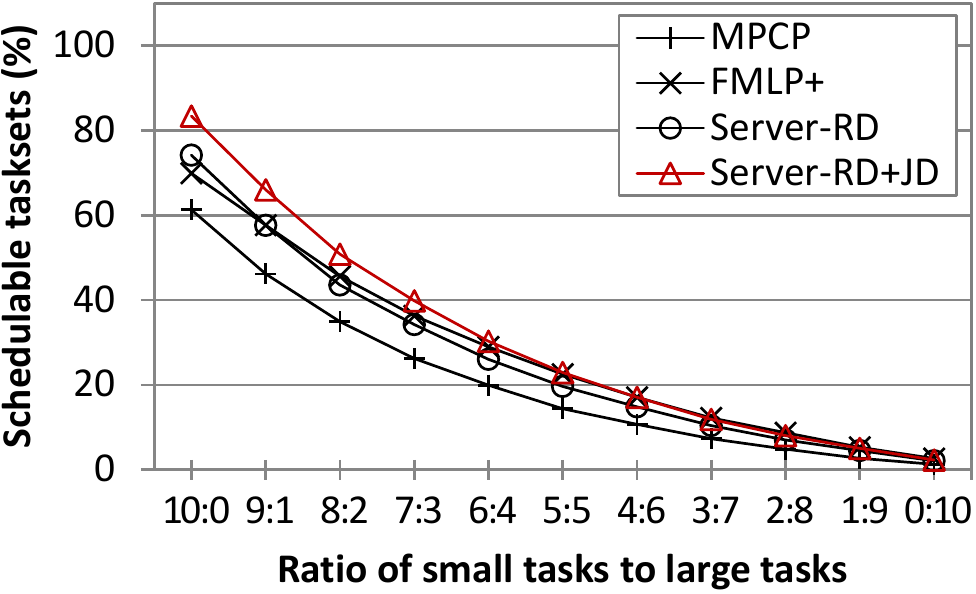}
	}
	\caption{Schedulability w.r.t. the ratio of small tasks to large tasks}
	\label{fig:expr_bimodal}
\end{figure}

The base task utilization given in \tableref{GPU_taskset_param} follows a uniform distribution. In Figure~\ref{fig:expr_bimodal}, we evaluate with bimodal distributions where the utilization of small tasks is chosen from $[0.05,0.2]$ and that of large tasks is chosen from $[0.2,0.5]$. The x-axis of each sub-figure in Figure~\ref{fig:expr_bimodal} shows the ratio of small tasks to large tasks in each taskset. While the server-based approach has higher schedulability for fewer large tasks, the gap gets smaller and the percentage of schedulable tasksets under all approaches goes down to zero as the ratio of large tasks increases, due to the high resulting utilization of generated tasksets.

\begin{figure}[t]
	\centering
	\subfloat[$N_P=4$]{\label{fig:expr_server_overhead_4cores}
		\includegraphics[width=0.48\columnwidth]{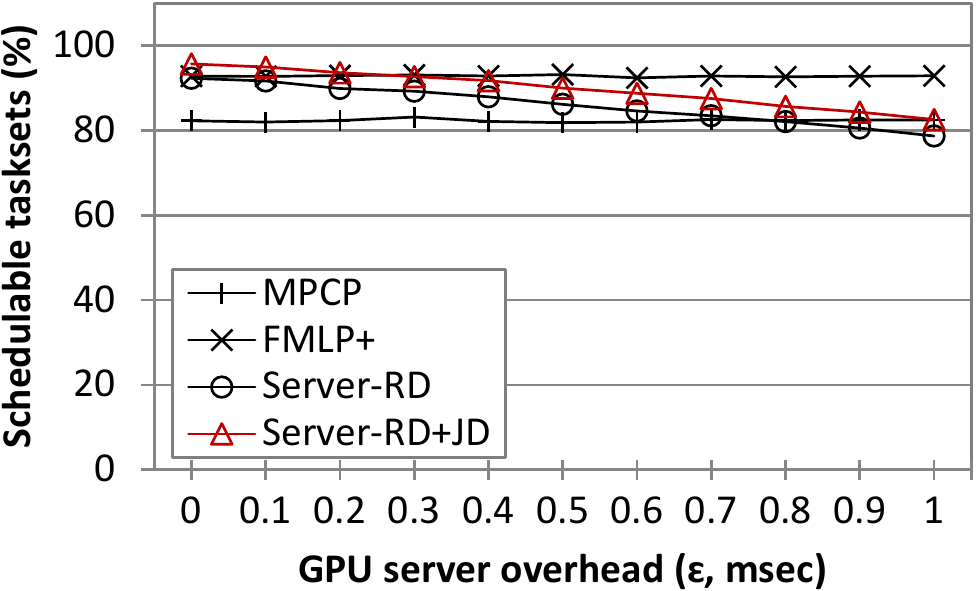}
	}
	\subfloat[$N_P=8$]{\label{fig:expr_server_overhead_8cores}
		\includegraphics[width=0.48\columnwidth]{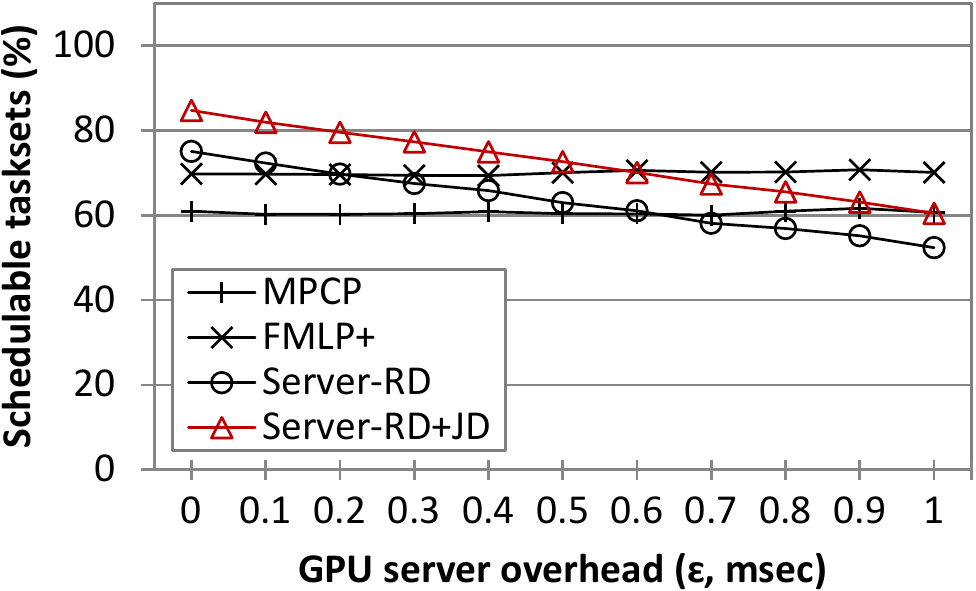}
	}
	\caption{Schedulability w.r.t. the GPU server overhead ($\epsilon$)}
	\label{fig:expr_server_overhead}
\end{figure}

We next investigate the factors that negatively impact the performance of the server-based approach. The GPU server overhead $\epsilon$ is obviously one such factor. Although an $\epsilon$ of 50~$\mu$s that we used in prior experiments is sufficient enough to upper-bound the GPU server overhead in most practical systems, we further investigate with larger $\epsilon$ values.  \figref{expr_server_overhead} shows the percentage of schedulable tasksets as the GPU server overhead~$\epsilon$ increases. Since $\epsilon$ exists only under the server-based approach, the performance of MPCP and FMLP+ is unaffected by this factor.\footnote{Only a small fluctuation in schedulability of synchronization-based approaches ($<1\%$) exists due to the nature of random parameters.} On the other hand, the performance of the server-based approach deteriorates as the overhead increases.

 \begin{figure}[t]
 	\centering
 	\subfloat[$N_P=4$]{\label{fig:expr_misc_operations_4cores}
 		\includegraphics[width=0.48\columnwidth]{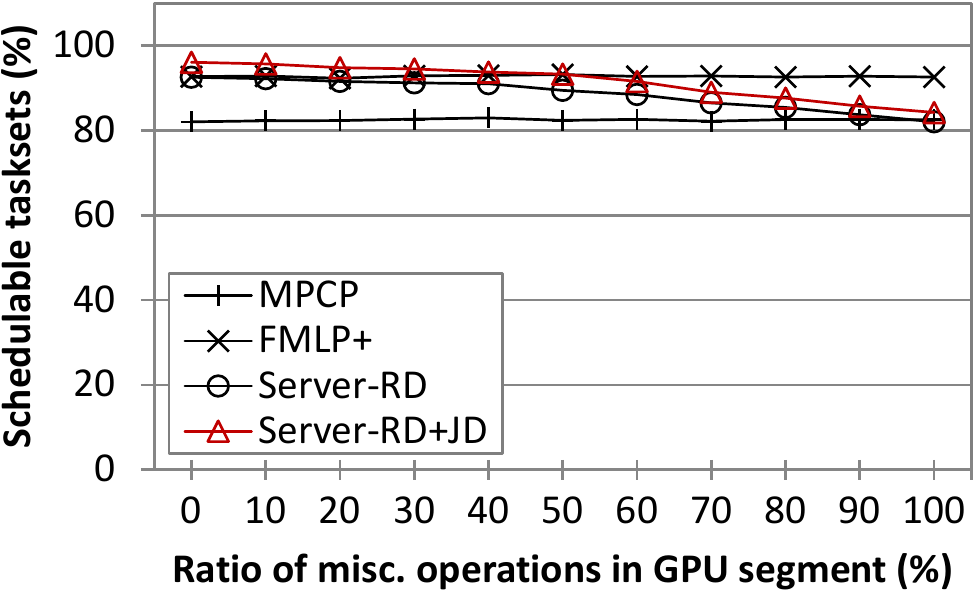}
 	}
 	\subfloat[$N_P=8$]{\label{fig:expr_misc_operations_8cores}
 		\includegraphics[width=0.48\columnwidth]{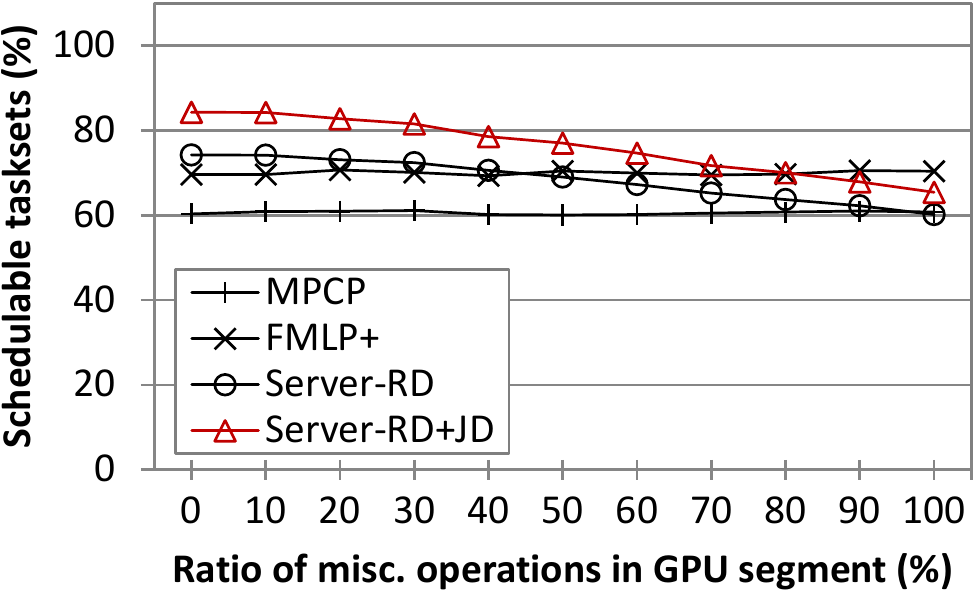}
 	}
 	\caption{Schedulability w.r.t. the ratio of miscellaneous operations in GPU access segments}
 	\label{fig:expr_misc_operations}
 \end{figure}

The length of miscellaneous operations in GPU access segments is another factor degrading the performance of the server-based approach because miscellaneous operations require the GPU server to consume a longer CPU time. \figref{expr_misc_operations} shows the percentage of schedulable tasksets as the ratio of miscellaneous operations in GPU access segments increases. As the analyses of MPCP and FMLP+ require tasks to busy-wait during their entire GPU access, their performance remains unaffected. On the other hand, as expected, the performance of the server-based approach degrades as the ratio of miscellaneous operations increases. Specifically, starting from the ratio of 60\% at $N_P=4$ and 90\% at $N_P=8$, the server-based approach underperforms FMLP+. However, we expect that such a high ratio of miscellaneous operations in GPU segments is hardly observable in practical GPU applications because memory copy is typically done by DMA and GPU kernel execution takes the majority time of GPU segments. 

 \begin{figure}[t]
	\centering
	\subfloat[$N_P=4$]{\label{fig:expr_min_task_period_4cores}
		\includegraphics[width=0.48\columnwidth]{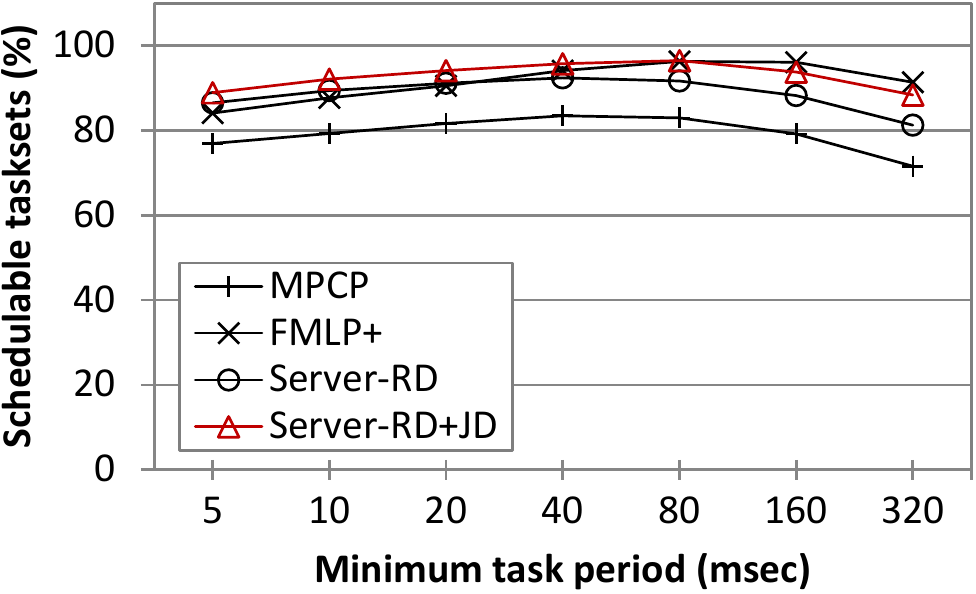}
	}
	\subfloat[$N_P=8$]{\label{fig:expr_min_task_period_8cores}
		\includegraphics[width=0.48\columnwidth]{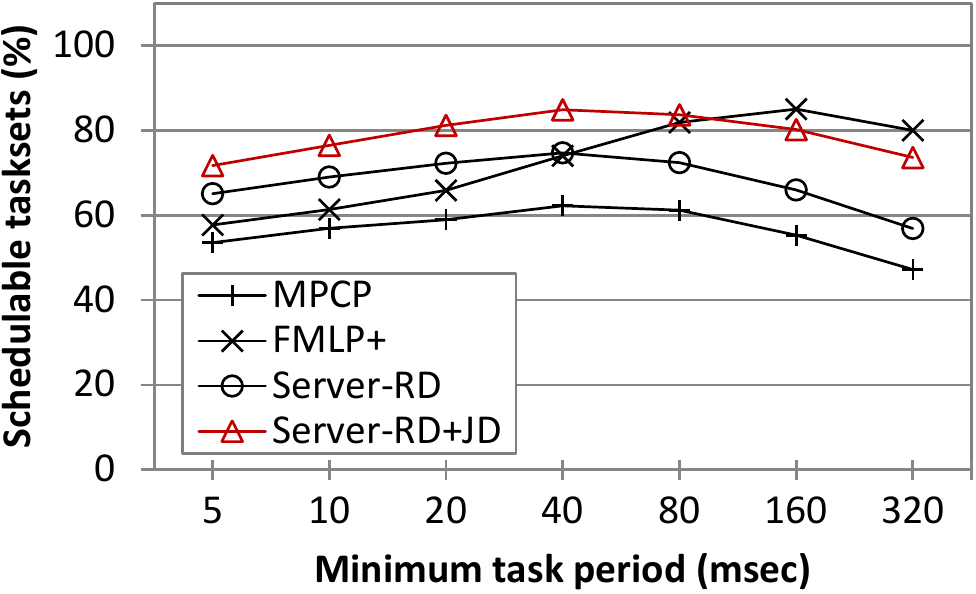}
	}
	\caption{Schedulability w.r.t. the minimum task period}
	\label{fig:expr_min_task_period}
\end{figure}

Lastly, we evaluate the impact of task periods in schedulability. Recall that task periods are uniformly chosen from $[T_{min}, T_{max}]$, where $T_{min}=20$ ms and $T_{max}=500$ ms in our base parameters. In this experiment, we fix $T_{max}$ to 500 ms and vary the value of $T_{min}$. The results are shown in Figure~\ref{fig:expr_min_task_period}. Interestingly, the server-based approach outperforms FMLP+ until $T_{min}$ reaches 80ms at $N_P=4$ and 160 ms at $N_P=8$, but it starts to underperform afterwards. We suspect that this is mainly due to the rate-monotonic (RM) priority assignment used in our experiments and the request ordering policies of the GPU server and FMLP+. The priority-based ordering of the GPU server favors higher-priority tasks and the FIFO ordering of FMLP+ ensures fairness in waiting time. If there is a large difference in task periods, it is better to give shorter waiting time to higher-priority tasks as they have shorter periods (and implicit deadlines) under RM. On the other hand, if the difference in task periods is relatively small, fair waiting time for tasks is likely to improve overall schedulability. There has been a long debate on priority-based and FIFO ordering in the literature. Further discussion on this issue is beyond the scope of the paper, and we leave the extension of the GPU server with FIFO ordering as part of future work.

In summary, the server-based approach outperforms the synchronization-based approach using MPCP and FMLP+ in most cases where realistic parameters are used. Specifically, the benefit of the server-based approach is significant when the percentage of GPU-using tasks is high, the number of GPU segments per tasks is large. However, we do find that the server-based approach does not dominate the synchronization-based approach. The synchronization-based approach may result in better schedulability than the server-based approach when the GPU server overhead or the ratio of miscellaneous operations in GPU segments is beyond the range of practical values (under MPCP and FMLP+) and the range of task period is relatively small (under FMLP+).

\section{Conclusions}
\label{conclusions}

In this paper, we have presented a server-based approach for predictable CPU access control in real-time embedded and cyber-physical systems. It is motivated by the limitations of the synchronization-based approach, namely busy-waiting and long priority inversion. By introducing a dedicated server task for GPU request handling, the server-based approach addresses those limitations, while ensuring the predictability and analyzability of tasks. The implementation and case study results on an NXP i.MX6 embedded platform indicate that the server-based approach can be implemented with acceptable overhead and performs as expected with a combination of CPU-only and GPU-using tasks. Experimental results show that the server-based approach yields significant improvements in task schedulability over the synchronization-based approach in most cases. Other types of accelerators such as DSPs and FPGAs can also benefit from our approach.

Our proposed server-based approach offers several interesting directions for future work. First, while we focus on a single GPU in this work, the server-based approach can be extended to multi-GPU and multi-accelerator systems. One possible way is to create a GPU server for each of the GPUs and accelerators, and allocating a subset of GPU-using tasks to each server. The optimization and dynamic adjustment of servers considering task allocation, overhead, and access duration is worthy of investigation. Secondly, the server-based approach can facilitate an efficient co-scheduling of GPU kernels. For instance, the latest NVIDIA GPU architectures can schedule multiple GPU kernels concurrently only if they belong to the same address space~\cite{NVIDIA_Pascal}, and the use of the GPU server satisfies this requirement, just as MPS~\cite{NVIDIA_MPS} does. Thirdly, as the GPU server has a central knowledge of all GPU requests, other features like GPU fault tolerance and power management can be developed. Lastly, the server-based approach can be extended to a virtualization environment and compared against virtualization-aware synchronization protocols~\cite{Kim_RTSS14}. We plan to explore these topics in the future.

\footnotesize
\bibliographystyle{abbrv}
\bibliography{paper}  

\end{document}